\documentclass[a4paper,11pt]{article}
\pdfoutput=1

\usepackage[vmargin=33mm,hmargin=33mm]{geometry}
\usepackage[UKenglish]{babel}
\usepackage[utf8]{inputenc}
\usepackage{microtype}
\usepackage{amsfonts,amsmath,amssymb,amsthm,booktabs,cite,color,doi,enumitem,graphicx,url}
\usepackage{hyperref}
\usepackage{thmtools}
\usepackage{thm-restate}

\newtheorem{theorem}{Theorem}
\newtheorem{lemma}[theorem]{Lemma}

\theoremstyle{remark}
\newtheorem{remark}{Remark}
 

\newcommand{\cS}{\mathcal{S}}
\newcommand{\logst}{\log^{*}}
\newcommand{\pt}[1]{{}^{#1}2} 


\hypersetup{
    colorlinks=true,
    linkcolor=black,
    citecolor=black,
    filecolor=black,
    urlcolor=black,
}


\newenvironment{mycover}
               {\list{}{\listparindent 0pt
                        \itemindent    \listparindent
                        \leftmargin    0pt
                        \rightmargin   0pt
                        \parsep        0pt}%
                \raggedright
                \item\relax}
               {\endlist}

\pagestyle{plain}

\begin{document}

\hypersetup{
    pdfauthor={Joel Rybicki, Jukka Suomela},
    pdftitle={Exact bounds for distributed graph colouring},
}

\begin{mycover}
{\LARGE \textbf{Exact bounds for distributed graph colouring}\par}

\bigskip

\medskip
\textbf{Joel Rybicki}\\

\smallskip
{\small Helsinki Institute for Information Technology HIIT, \\
Department of Computer Science, Aalto University

\medskip
Department of Algorithms and Complexity, \\
Max Planck Institute for Informatics

\medskip
\nolinkurl{joel.rybicki@aalto.fi}\par}

\bigskip
\textbf{Jukka Suomela}\\

\smallskip
{\small Helsinki Institute for Information Technology HIIT, \\
Department of Computer Science, Aalto University

\medskip
\nolinkurl{jukka.suomela@aalto.fi}\par}

\end{mycover}

\bigskip

\paragraph{Abstract.}

We prove exact bounds on the time complexity of distributed graph colouring. If we are given a directed path that is properly coloured with $n$ colours, by prior work it is known that we can find a proper $3$-colouring in $\frac{1}{2} \logst(n) \pm O(1)$ communication rounds. We close the gap between upper and lower bounds: we show that for infinitely many $n$ the time complexity is precisely $\frac{1}{2} \logst n$ communication rounds.

\thispagestyle{empty}
\setcounter{page}{0}
\newpage


\section{Introduction}

One of the key primitives in the area of distributed graph algorithms is \emph{graph colouring in directed paths}. This is a fundamental symmetry-breaking task, widely studied since the 1980s---it is used as a subroutine in numerous efficient distributed algorithms, and it also serves as a convenient starting point in many lower-bound proofs. In the 1990s it was already established that the distributed computational complexity of this problem is $\frac{1}{2} \logst(n) \pm O(1)$ communication rounds \cite{cole86deterministic,linial92locality}. We are now able to give \emph{exact} bounds on the distributed time complexity of this problem, and the answer turns out to take a surprisingly elegant form:

\begin{restatable}{theorem}{mainthm}\label{thm:main}
    For infinitely many values of $n$, it takes exactly $\frac{1}{2}\logst n$ rounds to compute a $3$-colouring of a directed path.
\end{restatable}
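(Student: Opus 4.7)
The plan is to pin down the time complexity exactly via a combinatorial characterisation, then identify an infinite family of path lengths $n$ where the characterisation matches $\frac{1}{2}\logst n$ with no slack.

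First, I would apply the standard translation from LOCAL algorithms to graph colouring: a $t$-round deterministic algorithm on a properly $n$-coloured directed path corresponds to a function on length-$(2t+1)$ sequences of distinct colours drawn from $[n]$, and it outputs a valid $3$-colouring if and only if it is a proper $3$-colouring of the neighbourhood graph $H(n,t)$ whose vertex set consists of these sequences and whose edges connect views that can occur at consecutive nodes of a directed path. Let $N(t) = \max\{n : \chi(H(n,t)) \le 3\}$. Then the time complexity of $3$-colouring the $n$-path is $T(n) = \min\{t : n \le N(t)\}$, and it suffices to exhibit infinitely many $t$ with $\logst N(t) = 2t$, since then $T(N(t)) = t = \frac{1}{2}\logst N(t)$.

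Second, I would derive a tight recurrence $N(t) = \phi(N(t-1))$ for an explicit $\phi$, eliminating the usual $O(1)$ slack. The upper bound on $N(t)$ should come from a Cole--Vishkin-style one-round colour-reduction that exploits both the in- and out-neighbour of each node symmetrically, so that each round performs the equivalent of two $\log$-reductions in the asymmetric setting---this is the source of the factor $\frac{1}{2}$ and must be matched exactly, not merely up to constants. The matching lower bound should follow from bounding $\chi(H(n,t))$ sharply, most plausibly by a recursive argument reducing $\chi(H(n,t))$ to a chromatic-number problem on a smaller graph in the same family. Unrolling the recurrence from a base case (e.g.\ $N(0) = 3$) then gives $N(t)$ as an explicit tower-like function of $t$.

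Third, once the closed form is in hand, a direct arithmetic check shows $\logst N(t) = 2t$ for an infinite subsequence of $t$ (those where the top of the tower lands inside the interval on which $\logst$ is constant and equals $2t$, rather than $2t \pm 1$). Each such $t$ yields a witness $n = N(t)$ with $T(n) = t = \frac{1}{2}\logst n$, proving the theorem.

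The main obstacle I anticipate is closing the additive $O(1)$ gap in the classical bounds on $N(t)$. The upper bound side demands an algorithm that loses no colours in any round, and the lower bound side demands that the chromatic-number computation be tight rather than off by a constant; the delicate step is likely the exact one-round reduction together with the corresponding extremal analysis of $H(n,1)$, since once a slack-free recurrence is available, the final verification that $\logst N(t) = 2t$ is routine arithmetic.
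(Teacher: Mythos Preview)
Your plan hinges on obtaining an \emph{exact} recurrence $N(t)=\phi(N(t-1))$---equivalently, computing the chromatic number of the neighbourhood graphs precisely---and you rightly flag this as the main obstacle. It is in fact the whole difficulty, and the paper does \emph{not} overcome it: the bounds on the one-sided complexity $T(n,3)$ still leave an additive gap of~$1$. The key idea you are missing is that this gap can be absorbed without ever being closed. The paper separates the two-sided complexity $C(n,3)$ from the one-sided complexity $T(n,3)$ via the identity $C(n,3)=\lceil T(n,3)/2\rceil$, proves $2k+1\le T(n,3)\le 2k+2$ for $n={}^{2k+1}2+1$, and observes that the ceiling sends both endpoints to $k+1=\tfrac{1}{2}\logst n$. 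So the factor $\tfrac{1}{2}$ does not come from a symmetric one-round reduction that is twice as fast; it comes from the rounding in $C=\lceil T/2\rceil$, and that rounding is precisely what makes an exact statement possible despite the remaining slack in~$T$.

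Two further points where your outline would fail as written. First, a Cole--Vishkin-style reduction is not sharp enough even with careful bookkeeping: the paper's upper bound $T({}^{h}2+1,3)\le h+1$ requires the Naor--Stockmeyer reduction $T\bigl(\binom{2k}{k},2k\bigr)\le 1$, which is strictly stronger than $T(2^k,2k)\le 1$. Second, the lower bound is not obtained by a tighter analysis of $H(n,1)$ alone; Linial's speed-up lemma, even in its sharpened form $T(n,2^c-2)\le T(n,c)-1$, only yields $T({}^{h}2,3)\ge h-2$ or so. The paper closes the gap on this side with a new ingredient---\emph{successor graphs} $\cS_k$---and shows (partly by computer) that $\cS_2$ is $16$-colourable, which gives $T(n,16)\le T(n,3)-2$ and hence $T({}^{h}2,3)\ge h$. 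Without an analogue of this step, your lower bound will remain off by a constant and the exactness claim will not go through.
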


\subsection{Problem Setting}

Throughout this work we focus on \emph{deterministic} distributed algorithms. As is common in this context, what actually matters is not the number of nodes but the range of their labels. For the sake of concreteness, we study precisely the following problem setting:
\begin{quote}
    We have a path or a cycle with any number of nodes, and the nodes are properly coloured with colours from $[n] = \{1,2,\dotsc,n\}$.
\end{quote}
The techniques that we present in this work can also be used to analyse other variants of the problem---for example, a cycle with $n$ nodes that are labelled with some permutation of $[n]$, or a path with at most $n$ nodes that are labelled with unique identifiers from $[n]$. However, the exact bounds on the time complexity will slightly depend on such details.

We will assume that there is a globally consistent orientation in the path: each node has at most one predecessor and at most one successor. Our task is to find a proper colouring of the path with $c$ colours, for some number $c \ge 3$. We will call this task \emph{colour reduction from $n$ to $c$}.

We will use the following model of distributed computing. Each node of the graph is a computational entity. Initially, each node knows the global parameters $n$ and $c$, its own label from $[n]$, its degree, and the orientations of its incident edges. Computation takes place in synchronous communication rounds. In each round, each node can send a message to each of its neighbours, receive a message from each of its neighbours, update its state, and possibly stop and output its colour. The \emph{running time} of an algorithm is defined to be the number of communication rounds until all nodes have stopped. We will use the following notation:
\begin{itemize}
    \item $C(n,c)$ is the time complexity of colour reduction from $n$ to $c$.
    \item $T(n,c)$ is the time complexity of colour reduction from $n$ to $c$ if we restrict the algorithm so that a node can only send messages to its successor. We call such an algorithm \emph{one-sided}, while unrestricted algorithms are \emph{two-sided}.
\end{itemize}
We can compose colour reduction algorithms, yielding $C(a,c) \le C(a,b) + C(b,c)$ and $T(a,c) \le T(a,b) + T(b,c)$ for any $a \ge b \ge c$. It is easy to see (shown in Lemma~\ref{lemma:hardness}) that
\[
    C(n,c) = \lceil T(n,c)/2 \rceil.
\]
We will be interested primarily in $C(n,c)$, but function $T(n,c)$ is much more convenient to analyse when we prove upper and lower bounds.

\subsection{Prior Work}

The asymptotically optimal bounds of
\[
    \logst(n) - O(1) \le T(n,3) \le \logst(n) + O(1)
\]
are covered in numerous textbooks and courses on distributed and parallel computing \cite{cormen90introduction,peleg00distributed,barenboim13distributed,wattenhofer13lecture,suomela-dabook}. The proof is almost unanimously based on the following classical results:
\begin{description}
    \item[Cole--Vishkin colour reduction (CV):] The upper bound was presented in the modern form by Goldberg, Plotkin, and Shannon \cite{goldberg88parallel} and it is based on the technique first introduced by Cole and Vishkin \cite{cole86deterministic}. The key ingredients are a fast colour reduction algorithm that shows that $T(2^k,2k) \le 1$ for any $k \ge 3$, and a slow colour reduction algorithm that show that $T(k+1,k) \le 2$ for any $k \ge 3$. By iterating the fast colour reduction algorithm, we can reduce the number of colours from $n$ to $6$ in $\logst(n) \pm O(1)$ rounds, and by iterating the slow colour reduction algorithm, we can reduce the number of colours from $6$ to $3$ in $6$ rounds (with one-sided algorithms).
    \item[Linial's lower bound:] The lower bound is the seminal result by Linial \cite{linial92locality}. The key ingredient is a speed-up lemma that shows that $T(n,2^c) \le T(n,c) - 1$ when $T(n,c) \ge 1$. By iterating the speed-up lemma for $\logst(n)-3$ times, we have $T(n,4) \ge T(n,k) + \logst(n) - 3$ for a $k < n$. Clearly $T(n,3) \ge T(n,4)$ and $T(n,k) \ge 1$, and hence $T(n,3) \ge \logst(n) - 2$.
\end{description}

In the upper bound, many sources---including the original papers by Cole and Vishkin and Goldberg et al.---are happy with the asymptotic bounds of $\logst(n) + O(1)$ or $O(\logst n)$. However, there are some sources that provide a more careful analysis. The analysis by Barenboim and Elkin \cite{barenboim13distributed} yields
$
    T(n,3) \le \logst(n) + 9,
$
and the analysis in the textbook by Cormen et al.~\cite{cormen90introduction} yields
$
    T(n,3) \le \logst(n) + 7
$.
In our lecture course \cite{suomela-dabook} we had an exercise that shows how to push it down to
\[
    T(n,3) \le \logst(n) + 6.
\]

In the lower bound, there is less variation. Linial's original proof \cite{linial92locality} yields $T(n,3) \ge \logst(n) - 3$, and many sources \cite{barenboim13distributed,suomela-dabook,laurinharju14linial-easy} prove a bound of
\[
    T(n,3) \ge \logst(n) - 2.
\]

On the side of lower bounds, nothing stronger than Linial's result is known. There are alternative proofs based on Ramsey's theorem \cite{czygrinow08fast} that yield the same asymptotic bound of $T(n,3) = \Omega(\logst n)$, but the constants one gets this way are worse than in Linial's proof.

On the side of upper bounds, however, there is an algorithm that is strictly better than CV: \textbf{Naor--Stockmeyer colour reduction (NS)} \cite{naor95what}. While CV yields $T(2^k,2k) \le 1$ for any $k \ge 3$, NS yields a strictly stronger claim of $T(\binom{2k}{k},2k) \le 1$ for any $k \ge 2$. However, the exact bounds that we get from NS are apparently not analysed anywhere, and their algorithm is hardly ever mentioned in the literature. Hence the state of the art appears to be
\begin{align*}
    \logst(n) - 2 &\le T(n,3) \le \logst(n) + 6, \\
    \frac{1}{2}\logst(n) - 1 &\le C(n,3) \le \frac{1}{2}\logst(n) + 3.
\end{align*}
Note that we have $\logst n \le 5$ for all $n < 10^{19728}$, and hence in practice the constant term $6$ dominates the term $\logst n$ in the upper bound.

\subsection{Contributions}

In this work we derive \emph{exact bounds} on $C(n,3)$ for infinitely many values of $n$, and near-tight bounds for all values of $n$. We prove that for infinitely many values of $n$
\begin{align*}
    C(n,3) = \frac{1}{2} \logst n,
\end{align*}
and for all sufficiently large values of $n$
\begin{align*}
    \logst(n) - 1 &\le T(n,3) \le \logst(n) + 1.
\end{align*}
With $C(n,3) = \lceil T(n,3)/2 \rceil$ this gives a near-complete picture of the exact complexity of colouring directed paths. The key new techniques are as follows:
\begin{enumerate}
    \item We give a new analysis of NS colour reduction.
    \item We give a new lower-bound proof that is strictly stronger than Linial's lower bound.
    \item We show that \emph{computational techniques} can be used to prove not only upper bounds but also lower bounds on $T(n,c)$, also for the case of a general $n$ and not just for fixed small values of $n$ and $c$. We introduce \emph{successor graphs} $\cS_i$ that are defined so that a graph colouring of $\cS_i$ with a small number of colours implies an improved bound on $T(n,3)$.
\end{enumerate}
This work focuses on colour reduction, i.e., the setting in which we are given a proper colouring as an input. Our upper bounds naturally apply directly in more restricted problems (e.g., the input labels are unique identifiers). Our lower bounds results do not hold directly, but the key techniques are still applicable: in particular, the successor graph technique can be used also in the case of unique identifiers.

\subsection{Applications}

Graph colouring in paths, and the related problems of graph colouring in rooted trees and directed pseudoforests, are key symmetry-breaking primitives that appear as subroutines in numerous distributed algorithms for various graph problems~\cite{goldberg88parallel,panconesi01some,garay98sublinear,astrand10vc-sc,czygrinow08fast,lenzen14steiner}.

One of the most direct application of our results is related to colouring \emph{trees}: In essence, colour reduction from $n$ to $c$ in trees with \emph{arbitrary} algorithms is the same problem as colour reduction from $n$ to $c$ in paths with \emph{one-sided} algorithms. Informally, in the worst case the children contain all possible coloured subtrees and hence ``looking down'' in the tree is unhelpful, and we can equally well restrict ourselves to ``looking up'' towards the root. Hence our bounds on $T(n,3)$ can be directly interpreted as bounds on colour reduction from $n$ to $3$ in trees.

The bounds have also applications outside distributed computing. A result by Fich and Ramachandran \cite{fich90linkedlists} demonstrates that bounds on $C(n,3)$ have direct implications in the context of \emph{decision trees} and \emph{parallel computing}.

Indeed, the fastest known \emph{parallel} algorithms for colouring linked lists are just adaptations of CV and NS colour reduction algorithms. These algorithms reduce the number of colours very rapidly to a relatively small number (e.g., dozens of colours), and the key bottleneck has been pushing the number of colours down to $3$. In particular, reducing the number of colours down to $3$ with state-of-the-art algorithms has been much more expensive than reducing it to $4$, but this phenomenon has not been understood so far. Prior bounds on $T(n,c)$ have not been able to show that the case of $c = 3$ is necessarily more expensive than $c = 4$. Our improved bounds are strong enough to separate $T(n,4)$ and $T(n,3)$.

From the perspective of practical algorithm engineering and programming, this work shows that we should avoid CV colour reduction, but we can be content with NS colour reduction; the former incurs a significant overhead (e.g., in terms of linear scans over the data in parallel computing), but the latter is near-optimal.

\section{Preliminaries}

\paragraph{Sets and Functions.} For any positive integer $k$, we use $[k]$ to denote the set $\{1, 2, \dots, k\}$. For any set $X$, we use $2^{X} = \{ Y \subseteq X \}$ to denote the powerset of $X$. Define the \emph{iterated logarithm} as
\begin{align*}
 \log^{(0)}(x) &= x, \\
 \log^{(i+1)}(x) &= \log^{(i)}( \log x ) \textrm{ for all } i \ge 0.
\end{align*}
In this work, all logarithms are in base 2. 
Moreover, the \emph{log-star} function is
\[
 \logst x = \min \{ i : \log^{(i)} x \le 1 \}.
\]
Finally, we define the \emph{tetration}, or a power tower, with base 2 as 
\begin{align*}
 \pt{0} &= 1, \\
 \pt{i+1} &= 2^{\left(\pt{i} \right)} \textrm{ for all } i \ge 0.
\end{align*}

\begin{figure}[t]
    \centering
    \includegraphics[page=1]{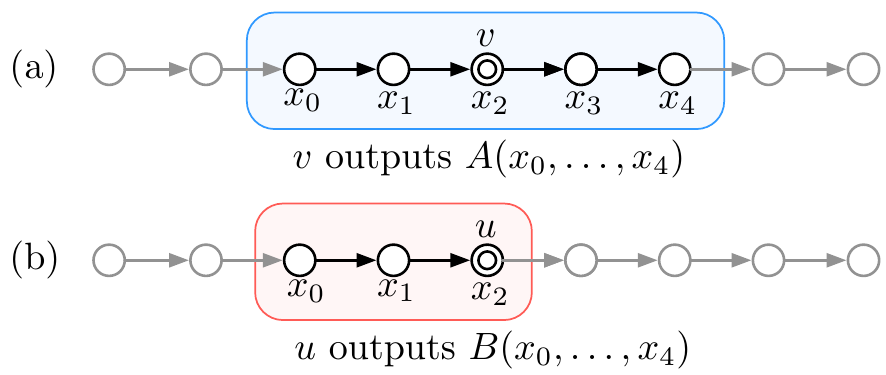}
    \caption{The difference of two-sided and one-sided algorithms. (a) A two-sided algorithm $A$ that runs for 2 rounds. (b) A one-sided algorithm $B$ that runs for 2 rounds. \label{fig:algorithms}}
\end{figure}

\paragraph{Algorithms.}

In this work, we focus on algorithms that run on directed paths. We distinguish between two-sided and one-sided algorithms; see Figure~\ref{fig:algorithms}. \emph{Two-sided algorithms} correspond to the usual notion of an algorithm in the LOCAL model: an algorithm running for $t$ rounds has to decide on its output using the information available at most $t$ hops away. Formally, a two-sided $c$-colouring algorithm corresponds to a function
\[
 A \colon [n]^{2t+1} \to [c].
\]
Moreover, as $A$ outputs a proper colouring, the function satisfies $A(x_0, \dots, x_{2t}) \neq A(x_1, \dots, x_{2t+1})$ when $x_{i} \neq x_{i+1}$ for all $i \ge 0$.

In contrast to two-sided algorithms, \emph{one-sided algorithms} are algorithms in which nodes can only send messages to successors. Therefore, a one-sided algorithm that runs in $t$ rounds can only gather information from at most $t$ \emph{predecessors}. Formally, a one-sided $c$-colouring algorithm $B$ that runs for $t$ steps corresponds to a function
\[
 B \colon [n]^{t+1} \to [c],
\]
which satisfies $B(x_0, \dots, x_{t}) \neq B(x_1, \dots, x_{t+1})$ when $x_{i} \neq x_{i+1}$ for all $i \ge 0$.

\begin{figure}[t]
    \centering
    \includegraphics[page=2]{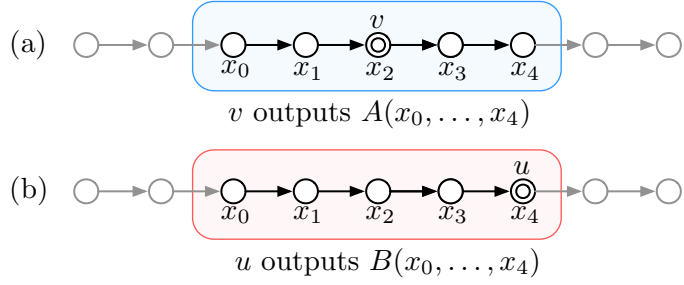}
    \caption{The correspondence between two-sided and one-sided algorithms. (a) A two-sided algorithm $A$ that runs for 2 rounds. (b) A one-sided algorithm that runs in 4 rounds. Both nodes see the same information, so $v$ can easily simulate $B$ and $u$ can simulate $A$. \label{fig:equivalence}}
\end{figure}

It is now easy to see that $C(n,c) = \lceil T(n,c)/2 \rceil$ holds. Intuitively, the connection is straightforward. For example, Figure~\ref{fig:equivalence} illustrates how a $t$-time two-sided algorithm can gather the same information as a $2t$-time one-sided algorithm. For the sake of completeness, we now prove this formally.

\begin{lemma}\label{lemma:hardness}
$C(n,c) = \left \lceil T(n,c)/2 \right \rceil$.
\end{lemma}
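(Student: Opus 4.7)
The plan is to prove the two inequalities $C(n,c) \le \lceil T(n,c)/2 \rceil$ and $T(n,c) \le 2 C(n,c)$ separately, both via direct simulation between the two algorithm classes, with careful bookkeeping of which position each simulating node pretends to be.

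First I would recall the view sizes: after $t$ rounds, a two-sided algorithm at node $v$ can compute any function of the labels at positions $v-t, v-t+1, \dots, v+t$ (that is, $2t+1$ labels centred at $v$), whereas a one-sided algorithm at node $v$ can compute any function of the labels at positions $v-t, v-t+1, \dots, v$ (that is, $t+1$ labels ending at $v$). These are exactly the domains of the functions $A$ and $B$ in the definitions above, and they are the only thing the proof uses.

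For the direction $T(n,c) \le 2 C(n,c)$, let $A$ be an optimal two-sided algorithm running in $t = C(n,c)$ rounds. I would define a one-sided algorithm $B$ in which every node $v$ computes, in $2t$ rounds, the output that $A$ would produce at position $v-t$: the view required by $A$ at $v-t$ is $\{v-2t, \dots, v\}$, which is precisely the view available to $v$ after $2t$ one-sided rounds. Adjacent one-sided nodes $v$ and $v+1$ then output the colours $A$ assigns to adjacent two-sided positions $v-t$ and $v+1-t$, so properness is inherited from $A$. This shows $T(n,c) \le 2 C(n,c)$, and since $C(n,c)$ is an integer this is equivalent to $C(n,c) \ge \lceil T(n,c)/2 \rceil$.

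For the direction $C(n,c) \le \lceil T(n,c)/2 \rceil$, let $B$ be an optimal one-sided algorithm running in $T = T(n,c)$ rounds, and set $t = \lceil T/2 \rceil$. I would define a two-sided algorithm $A$ in which every node $v$ simulates $B$ at the position $v + (t - \text{shift})$, choosing the shift so that the $T+1$ predecessors-and-self that $B$ needs at that position all lie within the two-sided view $\{v-t, \dots, v+t\}$. Concretely, for even $T = 2t$ node $v$ simulates $B$ at $v + t$ (needing labels $\{v-t,\dots,v+t\}$, which fits exactly), and for odd $T = 2t-1$ node $v$ simulates $B$ at $v + t$ as well (needing labels $\{v-t+1, \dots, v+t\}$, which fits). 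In both cases the simulated positions for adjacent two-sided nodes remain adjacent, so $A$'s output is a proper $c$-colouring. Combining both bounds gives the claim.

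The only delicate point is the index arithmetic in the odd case and the fact that a $\lceil T/2 \rceil$-round two-sided view really does contain the $T+1$ consecutive labels needed by $B$; I expect this to be routine once the shifts are written down explicitly. Boundary nodes of a (non-cyclic) path have strictly smaller views, but since any simulation that works on the full view also works on a truncated view (one just ignores the missing labels or treats them as a sentinel, as is standard in the LOCAL model), no additional argument is required there.
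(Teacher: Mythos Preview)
Your proof is correct and follows essentially the same simulation argument as the paper: in one direction a one-sided node uses its $2t$-predecessor view to evaluate the two-sided algorithm at its ``middle'' position, and in the other direction a two-sided node uses its radius-$\lceil T/2\rceil$ view to evaluate the one-sided algorithm at a suitably shifted position. The only cosmetic difference is that the paper phrases the simulation directly in terms of the colour vectors $(x_0,\dots,x_{2t})$ rather than explicit position shifts, and handles path endpoints by having a predecessor-less node deterministically invent a properly coloured virtual prefix (which is precisely the padding you allude to).
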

\begin{proof}
 First, we show that $T(n,c) \le 2 C(n,c)$. Let $t = C(n,c)$ and $A \colon [n]^{2t+1} \to [c]$ be a two-sided $c$-colouring algorithm that runs in time $t$. We construct a one-sided $c$-colouring algorithm that runs in time $2t$. Recall that a one-sided algorithm can only receive messages from predecessors. Initially, every node sends its own colour to its successor. Then for $2t-1$ rounds we send the colour received from the predecessor to the successor---in the case that a node has no predecessors, the node can simply simulate a properly coloured path preceding it. After $2t$ rounds the node knows its own colour and the colours of its $2t$ predecessors, that is, a vector $(x_0, \dots, x_{2t}) \in [n]^{2t+1}$. Outputting the value $A(x_0, \dots, x_{2t})$ yields a proper colouring. 

 Second, we show that $C(n,c) \le \left\lceil T(n,c) / 2 \right \rceil$. Let $t = \left\lceil T(n,c) / 2 \right \rceil$ and $B \colon [n]^{T(n,c)+1} \to [c]$ a one-sided algorithm that only receives messages from predecessors. Every node sends its colour to both neighbours and then forwards any messages in the $t-1$ subsequent rounds. As $2t \ge T(n,c)$, after $t$ rounds every node knows the colours $(x_0, \dots, x_{T(n,c)})$ in its local neighbourhood. Now the node can output $B(x_0, \dots, x_{T(n,c)})$ which gives a proper colouring.

 Finally, since the time complexity has to be integral---there are no ``half-rounds''---we get that $C(n,c) = \left \lceil T(n,c)/2 \right \rceil$.
\end{proof}

\section{The Upper Bound}

In this section, we bound $T(n,c)$ from above. To do this, we analyse the Naor--Stockmeyer (NS) colour reduction algorithm~\cite{naor95what}. The NS algorithm is one-sided, thus yielding upper bounds for $T(n,c)$.

Let us first recall the NS colour reduction algorithm. Let $n \le { 2k \choose k }$ for some $k \ge 2$ and fix an injection $f \colon [n] \to X$, where $X = \{ Y \subseteq [2k] : |Y| = k \}$. That is, we interpret all colours from $[n]$ as distinct $k$-subsets of $[2k]$.

The algorithm works as follows. First, all nodes send their colour to the successor. Then a node with colour $v$ receiving colour $u$ from its predecessor will output
\[
 A(u,v) = \min f(u) \setminus f(v).
\]
It is easy to show that if $u \neq v \neq w$, then $A(u,v) \in [2k]$ and $A(u,v) \neq A(v,w)$ holds. Thus, $A$ is a one-sided colour reduction algorithm that reduces the number of colours from ${2k \choose k}$ to $2k$ colours in one round and we have that $T\bigl({2k \choose k}, 2k\bigr) = 1$ for any $k \ge 2$.

The above algorithm cannot reduce the number of colours below 4. To reduce the number of colours from four to three, we can use the following one-sided algorithm $B$ that outputs
\[
 B(u,v,w) = \begin{cases}
              \min \{1,2,3\} \setminus \{u,w\} &\text{if } v = 4, \\
              v & \text{otherwise}.
            \end{cases}
\]
The algorithm uses two rounds and this is optimal by Lemma~\ref{lemma:4to3} in Section~\ref{sec:lb}.

We now show the following upper bounds for $T(n,c)$ using the NS colour reduction algorithm.

\begin{lemma}\label{lemma:ns}
 The function $T$ satisfies the following:
 \begin{enumerate}[label=(\alph*)]
  \item $T\bigl(\frac{3}{2} \cdot 2^c,\,\, \frac{3}{2} \cdot c \bigr) = 1$ for any $c = 4h$, where $h > 1$, 
  \item $T\bigl(\frac{3}{2} \cdot \pt{r+4},\,\, \frac{3}{2} \cdot \pt {4} \bigr) \le r$ for any $r \ge 0$,
  \item $T\bigl(\frac{3}{2} \cdot \pt{4},\,\, 3 \bigr) \le 5$.
 \end{enumerate}
\end{lemma}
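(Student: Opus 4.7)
My plan is to apply the NS colour reduction (which the excerpt has already established as $T(\binom{2k}{k},2k)=1$ for $k\ge 2$) together with the composition inequality $T(a,c)\le T(a,b)+T(b,c)$. All three parts reduce to a careful choice of parameters, and the only real combinatorial content sits in part (a).

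For part (a), I would instantiate NS with $k=3h$, so the output palette has $2k=6h=\frac{3}{2}c$ colours. The upper bound then reduces to verifying the inequality $\binom{6h}{3h}\ge \frac{3}{2}\cdot 2^{4h}$ for every $h\ge 2$. I would prove this via the elementary estimate $\binom{2n}{n}\ge 4^n/(2\sqrt{n})$, obtained by a one-step induction using the identity $\binom{2n+2}{n+1}/\binom{2n}{n}=2(2n+1)/(n+1)$. Substituting $n=3h$ reduces the desired bound to $2^{4h}\ge 27h$, which holds at $h=2$ (since $256\ge 54$) and then for all larger $h$ by the monotonicity of $16^h/h$ on $h\ge 1$. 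The matching lower bound $T\ge 1$ is immediate: any zero-round algorithm must inject the input palette into the output palette, which is impossible because $\frac{3}{2}\cdot 2^c>\frac{3}{2}c$ for $c\ge 1$. I expect this combinatorial inequality to be the main obstacle, though it is quite routine.

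For part (b), I would proceed by induction on $r$. The base case $r=0$ is vacuous because the identity map witnesses $T(\frac{3}{2}\pt{4},\frac{3}{2}\pt{4})=0$. For the inductive step at $r\ge 1$, I apply part (a) with $c=\pt{r+3}$; this is admissible because $\pt{r+3}$ is a power of two at least $\pt{4}=65536$, hence of the form $4h$ with $h>1$. Part (a) then gives $T(\frac{3}{2}\pt{r+4},\frac{3}{2}\pt{r+3})=1$, and composing with the inductive hypothesis $T(\frac{3}{2}\pt{r+3},\frac{3}{2}\pt{4})\le r-1$ yields the claim.

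For part (c), I would exhibit the explicit chain
\[
\tfrac{3}{2}\pt{4}=98304\;\to\;20\;\to\;6\;\to\;4\;\to\;3
\]
and invoke composition. The first three arrows each cost one round by NS, using $\binom{20}{10}=184756\ge 98304$ with $k=10$, then $\binom{6}{3}=20$ with $k=3$, then $\binom{4}{2}=6$ with $k=2$. The last arrow costs two rounds via the $4$-to-$3$ algorithm $B$ defined just above the lemma, for a total of $1+1+1+2=5$. A subtlety here is that using part (a) for the first step would instead land at $24$ colours, and one can check that no valid chain from $24$ reaches $3$ in four further rounds; hence the slightly tighter NS choice $k=10$ at the start is essential.
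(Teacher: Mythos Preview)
Your proposal is correct and follows essentially the same approach as the paper: part~(a) is proved by instantiating NS with $2k=\tfrac{3}{2}c$ and the standard central-binomial estimate $\binom{2k}{k}\ge 4^k/\sqrt{4k}$, part~(b) by iterating~(a), and part~(c) via the chain $\binom{20}{10}\leadsto 20\leadsto 6\leadsto 4\leadsto 3$ using NS three times and the explicit two-round $4\to 3$ algorithm. Your treatment is slightly more explicit in places (justifying the equality in~(a), spelling out the induction in~(b), and remarking on why $k=10$ is the right first step in~(c)), but the substance is identical.
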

\begin{proof}\mbox{}
 \begin{enumerate}[label=(\alph*)]
 \item As discussed, the NS colour reduction algorithm shows that $T\bigl( {2k \choose k}, 2k \bigr) = 1$ for $k \ge 2$. Recall the following bound for the central binomial coefficent
\[
 {2k \choose k} \ge \frac{4^k}{\sqrt{4k}}
\]
and let $2k = 3c/2$. Since $c \ge 8$ it follows that 
\[
{2k \choose k } \ge \frac{(2\cdot2)^{3c/4}}{\sqrt{3c}} = \frac{2^{c/2}}{\sqrt{3c}} \cdot 2^c > \frac{3}{2} \cdot 2^c.
\]

 \item To show the claim, it suffices to apply part (a) for $r$ times. 

 \item As ${20 \choose 10} > \frac{3}{2} \cdot \pt{4}$, we can reduce the number of colours to 4 in three rounds as follows: ${20 \choose 10} \leadsto {6 \choose 3} \leadsto {4 \choose 2} \leadsto 4$. By Lemma~\ref{lemma:4to3}, the remaining two rounds can be used to remove the fourth colour. \qedhere
 \end{enumerate}
\end{proof}

\begin{theorem}\label{thm:ub}
 $T(\pt{h},3) \le T(\pt{h}+1,3) \le h + 1$ holds for any $h > 1$.
\end{theorem}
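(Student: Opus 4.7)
The plan is to derive the bound $T(\pt{h}+1,3) \le h+1$ by composing the two parts of Lemma~\ref{lemma:ns}, with a separate check for the very smallest cases. The first inequality $T(\pt{h},3) \le T(\pt{h}+1,3)$ is just monotonicity of $T$ in its first argument: any properly coloured path using colours from $[\pt{h}]$ is also a properly coloured path using colours from $[\pt{h}+1]$, so any algorithm for the latter also solves the former. This observation will also be used below to pass from $\frac{3}{2}\pt{h}$ down to $\pt{h}+1$.

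For the main bound, I would handle the generic case $h \ge 4$ first. Since $\pt{h}\ge 2$ for $h\ge 1$, we have $\pt{h}+1 \le \frac{3}{2}\pt{h}$, so by monotonicity
\[
 T(\pt{h}+1,\,3) \;\le\; T\bigl(\tfrac{3}{2}\pt{h},\,3\bigr).
\]
Now apply composition $T(a,c)\le T(a,b)+T(b,c)$ with $b=\frac{3}{2}\pt{4}$, together with parts (b) and (c) of Lemma~\ref{lemma:ns} (the former used with $r=h-4\ge 0$):
\[
 T\bigl(\tfrac{3}{2}\pt{h},\,3\bigr)
 \;\le\; T\bigl(\tfrac{3}{2}\pt{h},\,\tfrac{3}{2}\pt{4}\bigr) + T\bigl(\tfrac{3}{2}\pt{4},\,3\bigr)
 \;\le\; (h-4) + 5 \;=\; h+1.
\]

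What remains is the two edge cases $h\in\{2,3\}$, where Lemma~\ref{lemma:ns}(b,c) is not directly applicable. These I would dispatch by invoking the basic NS bound $T(\binom{2k}{k},2k)\le 1$ with small $k$, composed with the optimal $T(4,3)\le 2$ algorithm stated just before Lemma~\ref{lemma:ns}. For $h=2$: since $\binom{4}{2}=6\ge 5=\pt{2}+1$, we get $T(5,3)\le T(5,4)+T(4,3)\le 1+2=3$. For $h=3$: since $\binom{6}{3}=20\ge 17=\pt{3}+1$ and $\binom{4}{2}=6$, we get $T(17,3)\le T(17,6)+T(6,4)+T(4,3)\le 1+1+2=4$. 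In both cases the bound $h+1$ is matched exactly.

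The proof is essentially bookkeeping, so there is no serious obstacle; the only mild subtlety is ensuring that the hypothesis $c=4h$ with $h>1$ in Lemma~\ref{lemma:ns}(a) forces the base of the tetration iteration to start at $\pt{4}$ rather than lower, which is precisely why the two smallest values of $h$ must be handled by hand.
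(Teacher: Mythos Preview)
Your proof is correct and follows essentially the same approach as the paper: combine Lemma~\ref{lemma:ns}(b) and (c) for large $h$, and handle the smallest values of $h$ via the explicit NS chain $\binom{6}{3}\leadsto\binom{4}{2}\leadsto 4\leadsto 3$. The only cosmetic difference is where the case split is drawn: the paper refers to ``the proof of Lemma~\ref{lemma:ns}c'' to cover $h\in\{2,3,4\}$ and starts the generic case at $h\ge 5$ (i.e., $r>0$), whereas you absorb $h=4$ into the generic case via $r=0$ and spell out $h\in\{2,3\}$ directly.
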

\begin{proof}
The cases $2 \le h \le 4$ follow from the proof of Lemma~\ref{lemma:ns}c. Suppose $h = r + 4$ for some $r > 0$. By Lemma~\ref{lemma:ns}b~and~c we can get a 3-colouring in $r + 5 = h+1$ rounds.
\end{proof}

\section{The Lower Bound}\label{sec:lb}

In this section, we give a new lower bound for the time complexity of one-sided colour reduction algorithms. The proof follows the basic idea of Linial's proof~\cite{linial92locality} adapted to the case of colour reduction, but we show a new lemma that can be used to tighten the bound.

The proof is structured as follows. First, we show that $T(n, 2^c-2) \le T(n,c) - 1$, that is, given a $c$-colouring algorithm, we can devise a faster algorithm that uses at most $2^c-2$ colours; this is just a minor tightening of the usual standard bound, and should be fairly well-known. Second, we prove that a fast 3-colouring algorithm implies a fast 16-colouring algorithm, more precisely, $T(n,16) \le T(n,3)-2$; this is the key contribution of this section. Together these yield the following new bound:

\begin{restatable}{theorem}{lbthm}
\label{thm:lb}
 For any $h > 1$, we have $T(\pt{h}, 3) \ge h$.
\end{restatable}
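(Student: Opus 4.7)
The plan is to combine two lemmas and iterate. The first is a standard Linial-style speed-up, which I would prove by the classical subset-unfolding: whenever $T(n,c) \ge 1$, one has $T(n, 2^c - 2) \le T(n, c) - 1$. The second is the new ingredient of this section, $T(n, 16) \le T(n, 3) - 2$. Granted both, set $t = T(\pt{h}, 3)$ and define $c_0 = 16$ and $c_{i+1} = 2^{c_i} - 2$; applying the new lemma once and then the speed-up $i$ further times yields $T(\pt{h}, c_i) \le t - 2 - i$, valid as long as the left-hand side is at least $1$. A short induction using $2^c - 2 < 2^c$ gives $c_i < \pt{i+3}$ for all $i \ge 1$, so for $h \ge 4$ the choice $i = h - 3$ produces $c_{h-3} < \pt{h}$. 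Since any $0$-round one-sided algorithm from $[n]$ to $[c]$ must be injective, $T(\pt{h}, c_{h-3}) \ge 1$, and hence $t \ge h$. The base case $h = 2$ follows directly from Lemma~\ref{lemma:4to3}, and the borderline $h = 3$ must be squeezed out by a slightly more careful accounting at the final step.

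To establish the speed-up lemma I would use the familiar Linial unfolding. Given a one-sided algorithm $B : [n]^{t+1} \to [c]$, define the $t$-input algorithm by $B'(x_0, \ldots, x_{t-1}) = \{B(x_0, \ldots, x_{t-1}, y) : y \in [n], \, y \neq x_{t-1}\} \subseteq [c]$. At neighbouring views the output subsets $S$ and $T$ must satisfy $S \not\subseteq T$, because the value $B(x_0,\ldots,x_t)$ lies in $S$ but is distinct from every entry of $T$. Indexing subsets of $[c]$ by an arbitrary colour palette gives a proper colouring with $2^c$ colour names; the saving of $2$ comes from observing that a couple of degenerate subsets (in particular the empty set) can never arise as $B'$-outputs, trimming the effective palette to $2^c - 2$.

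The heart of the proof, and the main obstacle, is the new lemma $T(n, 16) \le T(n, 3) - 2$: a blind double application of the speed-up from $c = 3$ only reaches $T(n, 62)$, so a substantially sharper argument is needed. My plan is to deploy the successor-graph framework announced in the contributions. Given a hypothetical fast $3$-colouring algorithm $B : [n]^{t+1} \to [3]$, I would construct a successor graph whose vertices encode the two-step unfoldings of $B$ and whose arcs record the successor-consistency constraints any derived $(t-2)$-round algorithm must respect, and then exhibit a proper $16$-colouring of this graph; decoding that colouring locally yields the desired $(t-2)$-round, $16$-colour algorithm on $[n]$. The main technical obstacle is the combinatorial step: naively there are $2^{|2^{[3]}|} = 256$ possible ``sets of subsets of $[3]$'' that could arise, and one must exploit the very restricted structure of valid $3$-colourings on short path segments to cut the usable palette all the way down to $16$.
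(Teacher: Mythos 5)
Your argument for $h \ge 4$ is essentially the paper's: apply the new lemma $T(n,16) \le T(n,3)-2$ once, iterate the speed-up $T(n,2^c-2)\le T(n,c)-1$, and use the fact that a $0$-round one-sided algorithm must be injective. The paper merely packages the iteration as a standalone statement ($T(\pt{r+3},16)\ge r+1$), but the arithmetic is the same as your $c_i < \pt{i+3}$ chain, and your base case $h=2$ via $T(4,3)\ge 2$ also matches. Your plan for the key lemma $T(n,16)\le T(n,3)-2$ --- successor graphs of the twice-sped-up algorithm plus an explicit $16$-colouring --- is likewise the paper's route; the real work there is exhibiting the colouring (the paper cuts the vertex set to at most $55$ sets of subsets, not $256$: the second-level colours are subsets of the six admissible first-level colours, minus those containing all three $2$-element sets, and the $16$ independent sets are then given explicitly or found by SAT solver). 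You have correctly identified this as the crux but not discharged it.

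The genuine gap is $h=3$, i.e.\ $T(16,3)\ge 3$. No ``more careful accounting at the final step'' can extract this from your two lemmas. Applying the new lemma at $n=16$ gives only $T(16,3)\ge 2 + T(16,16) = 2$, because the identity map is a valid $0$-round $16$-colouring of $[16]$ and the chain terminates there (the speed-up cannot be applied to a $0$-round algorithm); the pure speed-up chain from $c=3$ passes through $6$ and $62$ colours and is equally useless at $n=16$. The paper closes this case with an independent, computer-assisted fact (Lemma~\ref{lemma:comp-bounds}): the neighbourhood graph $\mathcal{N}_{7,1}$ is not $3$-colourable, hence $T(7,3)>2$ and a fortiori $T(16,3)\ge 3$. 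Some such separate argument for this single boundary value is unavoidable in your scheme, so as written your proof establishes the theorem only for $h\ne 3$.
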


\subsection{The Speed-up Lemma\label{sec:speedup}}

\begin{lemma}\label{lemma:speedup}
 If $T(n,c) \ge 1$, then $T(n, 2^c-2) \le T(n,c) - 1$.
\end{lemma}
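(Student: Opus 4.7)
The plan is to adapt Linial's classical speed-up construction to the one-sided setting and then tighten the usual counting by one extra subset.

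Let $t = T(n,c)$ and let $A \colon [n]^{t+1} \to [c]$ be an optimal one-sided $c$-colouring algorithm, noting that $T(n,c) \ge 1$ forces $n > c$, and in particular $n \ge 2$. I would define a new one-sided algorithm $A' \colon [n]^{t} \to 2^{[c]}$ running in $t - 1$ rounds by
\[
    A'(x_0, \ldots, x_{t-1}) := \bigl\{\, A(x_0, \ldots, x_{t-1}, y) \;:\; y \in [n] \setminus \{x_{t-1}\}\,\bigr\},
\]
interpreting a non-empty subset of $[c]$ as a colour. Operationally, a node simulates $A$ on all legal continuations of the last coordinate.

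To check that $A'$ is a proper colouring, I would show that, on any two consecutive windows $(x_0, \ldots, x_{t-1})$ and $(x_1, \ldots, x_t)$ along a properly coloured input, the outputs differ. The value $a := A(x_0, \ldots, x_t)$ lies in $A'(x_0, \ldots, x_{t-1})$ via the choice $y = x_t$. For any valid $y' \ne x_t$, the pair $(x_0, \ldots, x_t)$ and $(x_1, \ldots, x_t, y')$ consists of two consecutive proper inputs of $A$, hence $A(x_0, \ldots, x_t) \ne A(x_1, \ldots, x_t, y')$; thus $a \notin A'(x_1, \ldots, x_t)$, which is strictly stronger than $A'(x_0, \ldots, x_{t-1}) \ne A'(x_1, \ldots, x_t)$.

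Finally I would bound the image of $A'$. Since $n \ge 2$, at least one admissible $y$ always exists, so $\emptyset$ never appears as an output; this already yields a one-sided $(2^c-1)$-colouring in $t-1$ rounds, which is the classical form of the speed-up lemma. The main obstacle is to shave one more element from the alphabet so that at most $2^c - 2$ distinct values actually occur. The natural candidate to remove is the full set $[c]$; I would attempt to exclude it using the stronger consecutive constraint $A'(\vec x) \not\subseteq A'(\vec x')$ derived above, combined either with a cheap preprocessing of $A$ (composing with a canonical relabelling so that some fixed colour is never realised as a continuation of a given prefix) or with a piecewise renaming that folds the handful of problematic outputs into another subset while preserving distinctness on consecutive windows. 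Once a concrete second subset is eliminated, composing with any bijection into $[2^c - 2]$ yields the desired one-sided $(2^c-2)$-colouring algorithm running in $t-1$ rounds.
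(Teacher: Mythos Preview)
Your construction of $A'$ and your proof that consecutive outputs differ are exactly the paper's argument; in fact you prove the stronger non-containment $A'(\vec x) \not\subseteq A'(\vec x')$ for consecutive windows $\vec x, \vec x'$, which the paper also implicitly uses.

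The gap is in your last paragraph: you treat the exclusion of $[c]$ as an open obstacle and propose preprocessing or piecewise renaming, but none of that is needed. The non-containment you already established does the job directly. Given any window $\vec x' = (x_1,\dots,x_t)$, pick any $x_0 \in [n]\setminus\{x_1\}$ (possible since $n \ge 2$) and set $\vec x = (x_0,\dots,x_{t-1})$. Then $\vec x, \vec x'$ are consecutive proper windows, so $A'(\vec x) \not\subseteq A'(\vec x')$; but $A'(\vec x) \subseteq [c]$ always, hence $A'(\vec x') \ne [c]$. This is precisely the paper's argument for excluding the full set, phrased there as ``if $v$ could output any value in $[c]$, then $u$ and $v$ could collide''. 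With $\emptyset$ and $[c]$ both excluded, a bijection into $[2^c-2]$ finishes the proof.
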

\begin{proof}
 Let $t = T(n,c)$ and $A \colon [n]^{t+1} \to [c]$ be a one-sided $c$-colouring algorithm. We will construct a faster one-sided algorithm $B$ as follows. Consider a node $u$ and its successor $v$. In $t-1$ rounds, node $u$ can find out the colours of its $t-1$ predecessors and its own colour, that is, some vector $(x_0, \dots, x_{t-1}) \in [n]^t$. In particular, node $u$ now knows what information node $v$ can gather in $t$ rounds \emph{except} the colour of $v$ since $A$ is one-sided. However, $u$ can enumerate all the possible outputs of $v$ which give the set
\[
 B(x_0, \dots, x_{t-1}) = \bigl\{ A(x_0, \dots, x_{t-1}, y) \colon y \neq x_{t-1}, y \in [n]\bigr\} \subseteq [c].
\]
Clearly $B(x_0, \dots, x_{t-1}) \ne \emptyset$. We also have $B(x_0, \dots, x_{t-1}) \ne [c]$: For the sake of contradiction, suppose otherwise. This would imply that $v$ could output any value in $[c]$. In particular, if $u$ outputs $A(z, x_0, \dots, x_{t-1}) = a$ for some $z \in [n]$, we could pick $y \in [n]$ such that $A(x_0, \dots, x_{t-1}, y) = a$ as well. However, this would contradict the fact that $A$ was a colouring algorithm. Hence there exists an injection $f$ that maps any possible set $B(\cdot)$ to a value in $[2^c-2]$. 

It remains to argue that no two adjacent nodes construct the same set. Suppose a node $u$ outputs set $X$ and its successor $v$ also outputs $X$. Now we can pick $k \in X$ such that 
\[
 A(x_0, \dots, x_{t-1}, y) = k = A(x_1, \dots, x_{t-1}, y, y')
\]
for some $x_{t-1} \neq y \neq y'$ contradicting that $A$ outputs a proper colouring. Therefore, $f \circ B$ is a one-sided $(2^c-2)$-colouring algorithm that runs in time $t-1 = T(n,c) - 1$.
\end{proof}

\begin{lemma}\label{lemma:rounds}
For any $r>0$, we have $T(\pt{r+3}, 16) \ge r+1$.
\end{lemma}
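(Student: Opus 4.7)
The plan is to use Linial's iterated-speed-up strategy driven by Lemma~\ref{lemma:speedup}. Write $f(c) = 2^c - 2$ for the colour-growth function of a single speed-up step; $f$ is monotone and satisfies $f(c) > c$ for all $c \ge 2$. Suppose for contradiction that $T(\pt{r+3}, 16) = t$ with $t \le r$. Iterating Lemma~\ref{lemma:speedup} exactly $t$ times is legitimate, since the running-time bound drops by one per step and stays $\ge 1$ until the last application, yielding
\[
T(\pt{r+3},\, f^{(t)}(16)) \le 0.
\]
A zero-round colour-reduction algorithm is a function $[n] \to [c]$ that must be injective---otherwise two distinct inputs $u \neq v$ with the same output could sit as neighbours in some properly coloured input path, destroying the output's properness---so the previous inequality forces $f^{(t)}(16) \ge \pt{r+3}$.

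It remains to contradict this by bounding the tower. I would prove by induction on $k \ge 1$ that $f^{(k)}(16) < \pt{k+3}$. The base case $k = 1$ reads $2^{16} - 2 < 2^{16} = \pt{4}$, and the inductive step is
\[
f^{(k+1)}(16) = 2^{f^{(k)}(16)} - 2 < 2^{\pt{k+3}} = \pt{k+4},
\]
using the induction hypothesis and monotonicity of $c \mapsto 2^c$. Since $f$ is monotone and $t \le r$, this gives $f^{(t)}(16) \le f^{(r)}(16) < \pt{r+3}$ whenever $r \ge 1$, contradicting the bound obtained in the previous paragraph.

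The only genuine subtlety is the tower bookkeeping: the choice $16 = \pt{3}$ is precisely what makes $r$ speed-up rounds push the colour count to \emph{just} below $\pt{r+3}$, so the bound in the statement is exactly what this proof method delivers. Everything else---applicability of Lemma~\ref{lemma:speedup} at each iteration, monotonicity of $f$, and the injectivity characterisation of a $0$-round algorithm---is routine, so there is no real obstacle beyond setting up the induction on the tower.
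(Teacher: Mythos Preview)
Your proof is correct and follows essentially the same route as the paper's: iterate the speed-up lemma starting from $16 = \pt{3}$, then use that a $0$-round algorithm forces $c \ge n$ to obtain a contradiction with the tower bound. The only cosmetic difference is that the paper phrases the argument contrapositively (assuming $T(n,16)\le r$ and concluding $n<\pt{r+3}$) and tracks the coarser colour counts $\pt{k+3}-2$ rather than your exact $f^{(k)}(16)$; one small wording issue is that the hypothesis of Lemma~\ref{lemma:speedup} is $T(n,c)\ge 1$, not merely that your \emph{upper bound} on it is $\ge 1$, but since $T(n,c)=0$ already gives $f^{(k)}(16)\ge n$ and hence the same contradiction, this does not affect the argument.
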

\begin{proof}
Fix $r>0$. We repeatedly apply Lemma~\ref{lemma:speedup}. Now suppose we have an algorithm that reduces the number of colours from $n$ to $16=\pt{3}$ in $r$ rounds. That is, $T(n, \pt{3}) \le r$ holds for some $n\ge3$. From Lemma~\ref{lemma:speedup} it follows that
\begin{align*}
T(n, \pt{3}) \le r &\implies T(n, \pt{4}-2) \le r - 1 \\
                   &\implies \cdots \\
                   &\implies T(n, \pt{3+r}-2) \le 0,
\end{align*}
but as $T(k,k-1) \ge 1$ for any $k$ it follows that $n < \pt{3+r}$. Thus, $T(\pt{r+3},16) \ge r +1$.
\end{proof}

\subsection{Proof of Theorem~\ref{thm:lb}}

In addition to the speed-up lemma, we need a few more lemmas that bound $T(n,3)$ below for small values of $n$.

\begin{lemma}\label{lemma:4to3}
 $T(4,3) \ge 2$.
\end{lemma}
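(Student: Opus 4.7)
The plan is to argue that a hypothetical one-sided one-round algorithm would be a function $A \colon [4]^2 \to [3]$ satisfying $A(u,v) \neq A(v,w)$ whenever $u \neq v$ and $v \neq w$, and then derive a contradiction by a counting/parity argument. I view $A$ as a 3-edge-colouring of the complete directed graph $K$ on vertex set $[4]$ (without loops), which has exactly $12$ edges.

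For each colour $c \in [3]$, let $P_c = A^{-1}(c) \subseteq E(K)$, and define $\mathrm{Out}_c$ (respectively $\mathrm{In}_c$) as the set of vertices appearing as the tail (respectively head) of some edge in $P_c$. The correctness condition on $A$ forces $\mathrm{Out}_c \cap \mathrm{In}_c = \emptyset$: any common vertex $v$ would give edges $(u,v), (v,w) \in P_c$ with $u \neq v$ and $v \neq w$, i.e., a length-$2$ path with $A(u,v) = c = A(v,w)$.

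Next I exploit this disjointness. Since $|\mathrm{Out}_c| + |\mathrm{In}_c| \le 4$, AM--GM gives $|P_c| \le |\mathrm{Out}_c| \cdot |\mathrm{In}_c| \le 4$. But $\sum_c |P_c| = 12$, so every one of these inequalities is tight: $|\mathrm{Out}_c| = |\mathrm{In}_c| = 2$, $|P_c| = 4$, and moreover $P_c$ contains every edge from $\mathrm{Out}_c$ to $\mathrm{In}_c$. Thus each colour induces a partition of $[4]$ into two size-$2$ parts, with all four edges between them taking that colour.

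Finally, a parity argument closes the proof. Vertex $1$ has out-degree $3$ in $K$, but for each colour $c$, vertex $1$ belongs to exactly one of $\mathrm{Out}_c$ or $\mathrm{In}_c$, and its number of outgoing edges of colour $c$ is either $0$ or $|\mathrm{In}_c| = 2$. Summing over the three colours gives an even out-degree at vertex $1$, contradicting $3$. The only step that requires care is the squeeze forcing $|\mathrm{Out}_c| = |\mathrm{In}_c| = 2$ simultaneously for all three colours; everything else is direct bookkeeping.
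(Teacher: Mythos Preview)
Your proof is correct and takes a genuinely different route from the paper. The paper's proof is essentially a one-line appeal to exhaustive checking: it asserts that for any function $B'\colon\{(u,v):u\ne v\}\to[3]$ one can always find $u\ne v$ and $w\ne v$ with $B'(u,v)=B'(v,w)$, and leaves the verification to the reader (there are only $3^{12}$ candidate functions, and really far fewer cases once symmetries are used). Your argument, by contrast, is a clean structural counting-and-parity proof: you model the hypothetical algorithm as a $3$-edge-colouring of the loopless directed $K_4$, observe that the correctness constraint forces each colour class to live in a bipartite ``$\mathrm{Out}_c\times\mathrm{In}_c$'' block, squeeze the sizes via $\sum_c|P_c|=12$ and AM--GM to force every colour class to be a directed $K_{2,2}$, and then finish with a parity contradiction on the out-degree of a single vertex. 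This avoids any case enumeration and exposes \emph{why} the obstruction occurs (the squeeze forces each colour to contribute an even number of outgoing edges at every vertex, incompatible with out-degree~$3$); the paper's approach is shorter to state but gives no such insight. Your one flagged ``care'' step is fine: tightness in $\sum_c|P_c|=12\le 3\cdot4$ forces $|P_c|=4$ for every $c$, and then tightness in $|P_c|\le|\mathrm{Out}_c|\cdot|\mathrm{In}_c|\le 4$ with $|\mathrm{Out}_c|+|\mathrm{In}_c|\le 4$ forces $|\mathrm{Out}_c|=|\mathrm{In}_c|=2$ and $P_c=\mathrm{Out}_c\times\mathrm{In}_c$, so $\{\mathrm{Out}_c,\mathrm{In}_c\}$ is indeed a partition of $[4]$ and vertex~$1$ lies in exactly one side.
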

\begin{proof}
 Let $B' \colon (u,v) \to \{1,2,3\}$ be a one-sided 3-colouring algorithm that runs in one round. Now $B'$ yields a partitioning of the possible input pairs $(u,v)$ where $u \neq v$. It is simple to check that there always exists a pair $(u,v)$ with $u\neq v$ such that there also exists some $w \neq v$ satisfying $B'(u,v) = B'(v,w)$.
\end{proof}

\begin{restatable}{lemma}{compbounds}
 \label{lemma:comp-bounds}
 $T(16,3) \ge 3$.
\end{restatable}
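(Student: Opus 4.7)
The plan is to reformulate the claim as a graph-colouring question on an explicit finite graph, and then to verify non-$3$-colourability of that graph computationally, in line with the paper's stated use of computational techniques for lower bounds.

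First, I would spell out the correspondence: a one-sided $2$-round $3$-colouring algorithm on $[16]$-labelled paths is exactly a function $B\colon[16]^3\to[3]$ satisfying $B(x_0,x_1,x_2)\ne B(x_1,x_2,x_3)$ whenever $x_0\ne x_1$, $x_1\ne x_2$ and $x_2\ne x_3$. Equivalently, it is a proper $3$-colouring of the \emph{successor graph} $\cS$ whose vertex set is $\{(x_0,x_1,x_2)\in[16]^3 : x_0\ne x_1,\ x_1\ne x_2\}$ and whose edges join $(x_0,x_1,x_2)$ to $(x_1,x_2,x_3)$ for every $x_3\ne x_2$. The statement $T(16,3)\ge 3$ becomes $\chi(\cS)>3$.

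Next, I would prune the search space combinatorially before appealing to a solver. For each pair $a\ne b$ in $[16]$, set $F(a,b)=\{B(c,a,b):c\ne a\}$ and $G(a,b)=\{B(a,b,d):d\ne b\}$. Since every vertex counted in $F(a,b)$ is adjacent in $\cS$ to every vertex counted in $G(a,b)$, we must have $F(a,b)\cap G(a,b)=\emptyset$ with both nonempty, so each pair uses only a proper nonempty subset of $[3]$. Also, for every chain $a\ne b\ne d$ the value $B(a,b,d)$ lies in $G(a,b)\cap F(b,d)$, forcing $G(a,b)\cap F(b,d)\ne\emptyset$. These constraints, together with the diagonal action of $S_{16}$ on labels, let one fix a small initial portion of the assignment without loss of generality.

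The main obstacle is the verification itself, since the bound is tight and there is no apparent small non-$3$-colourable substructure of $\cS$ to exhibit by hand. My plan is to encode proper $3$-colourability of $\cS$ as a CNF instance with the symmetry-breaking above baked in, and feed it to a modern SAT solver, obtaining UNSAT together with a DRAT certificate that can be checked independently. I do not expect a short human-level combinatorial proof; the computational route is the decisive one, and it matches the technique the paper highlights as a central contribution for lower bounds.
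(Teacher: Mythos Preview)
Your approach is correct and, like the paper, ultimately relies on a SAT check of non-$3$-colourability of an explicit finite graph; but the paper takes a shorter detour. Instead of working directly with the one-sided graph on $[16]^3$ (about $16\cdot 15\cdot 15=3600$ vertices), the paper passes through the two-sided problem via the identity $C(n,c)=\lceil T(n,c)/2\rceil$: it shows that the \emph{neighbourhood graph} $\mathcal{N}_{7,1}$ on triples $(x_0,x_1,x_2)\in[7]^3$ with $x_0\ne x_1\ne x_2$, $x_0\ne x_2$ is not $3$-colourable, hence $C(7,3)>1$, hence $T(7,3)\ge 3$, hence $T(16,3)\ge T(7,3)\ge 3$. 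This buys a much smaller instance (at most $210$ vertices) with no need for the symmetry-breaking and $F/G$ preprocessing you introduce, at the cost of one extra logical hop through Lemma~\ref{lemma:hardness}. Your direct one-sided formulation is perfectly valid and arguably cleaner conceptually; just note that the graph you build is what the paper would call a (one-sided) neighbourhood graph, not a ``successor graph'' in the paper's sense---the latter term is reserved there for the graphs $\cS_k(A)$ arising from iterating the speed-up lemma.
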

\begin{proof}
As observed by Linial~\cite{linial92locality}, we can show $C(n,c) = t$ if the so-called \emph{neighbourhood graph} $\mathcal{N}_{n,t}$ has a chromatic number of $c$. While Linial analytically bounded the chromatic number of such graphs, we can also compute their chromatic numbers exactly for small values of $n$, $c$, and $t$; see~\cite{rybicki11msc} for a detailed discussion. We use the latter technique to show the claimed bound. That is, the neighbourhood graph $\mathcal{N}_{7,1}$ is not 3-colourable. 

The neighbourhood graph $\mathcal{N}_{7,1} = (V,E)$ is defined as follows. The set of vertices is 
\[
 V = \{ (x_0, x_1, x_2) \in [n]^3 : x_0 \neq x_1 \neq x_2, x_0 \neq x_2 \},
\]
where $n = 7$ and the set of edges is 
\[
 E = \{ \{u,v\} : u,v \in V, u=(x_0, x_1, x_2), v= (x_1, x_2, x_3) \}.
\]
It is easy to check with a computer (e.g.\ using any off-the-shelf SAT or an IP solver) that the graph $\mathcal{N}_{7,1}$ is not 3-colourable. Therefore, $C(7,3) > 1$ and in particular $T(16,3) \ge T(7,3) > 2$.
\end{proof}

To get a lower bound for 3-colouring, we show in the following sections that the existence of a $t$-time one-sided 3-colouring algorithm implies a $(t-2)$-time one-sided 16-colouring algorithm. 

\begin{lemma}
\label{lemma:16-cols}
For any $n \ge 16$, it holds that $T(n,16) \le T(n, 3) - 2$.
\end{lemma}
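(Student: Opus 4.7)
The plan is to construct a one-sided $16$-colouring algorithm running in $T(n,3) - 2$ rounds, given any one-sided $3$-colouring algorithm $A\colon [n]^{t+1} \to [3]$ that runs in $t = T(n,3)$ rounds. The construction proceeds in two speed-up stages, echoing the structure of Lemma~\ref{lemma:speedup} but carried out twice with a careful count of how many outputs can actually arise.

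First, I would apply Lemma~\ref{lemma:speedup} once to obtain the set-valued algorithm $A_1 \colon [n]^t \to \mathcal{F}$ defined by $A_1(\sigma, y) = \{A(\sigma, y, z) : z \neq y\}$, where $\mathcal{F}$ denotes the family of six non-empty proper subsets of $[3]$. By that lemma, $A_1$ is itself a proper colouring and runs in $t - 1$ rounds. Second, I would define $B\colon [n]^{t-1} \to 2^{\mathcal{F}}$ by
\[
 B(\tau) = \{A_1(\tau, y) : y \in [n],\ y \neq \tau_{\mathrm{last}}\},
\]
which runs in $t - 2$ rounds. To check that $B$ is a proper colouring, I would adapt the argument of Lemma~\ref{lemma:speedup}: suppose $B(\tau) = B(\tau')$ for adjacent $\tau = (y_0, \dots, y_{t-2})$ and $\tau' = (y_1, \dots, y_{t-1})$, and pick $X = A_1(\tau, y_{t-1}) \in B(\tau) = B(\tau')$. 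Then there is some $y' \neq y_{t-1}$ with $A_1(\tau', y') = X$. The two length-$t$ sequences $(y_0, \dots, y_{t-1})$ and $(y_1, \dots, y_{t-1}, y')$ are adjacent inputs to $A_1$, so assigning them the same value $X$ contradicts the properness of $A_1$.

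The main obstacle is bounding the number of distinct values of $B$ by $16$ rather than the naive upper bound of $2^6 - 2 = 62$. A priori $B(\tau)$ could be any non-empty proper subset of the $6$-element family $\mathcal{F}$, but because $A$ is a proper $3$-colouring, the collection $\{A_1(\tau, y) : y\}$ satisfies additional compatibility constraints: when we vary $y$ and $y'$, the values $A(\tau, y, y')$ and $A(\tau, y', y)$ must be distinct from certain neighbouring $A$-colours, which in turn restricts which pairs of subsets of $[3]$ can appear together in $B(\tau)$. I expect a careful combinatorial analysis to show that the set of realisable values of $B(\tau)$ forms a fixed subfamily $\cS \subseteq 2^\mathcal{F}$ with $|\cS| \le 16$. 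Once this is established, composing $B$ with any injection $\cS \hookrightarrow [16]$ yields the required $16$-colouring algorithm in $t - 2$ rounds. Identifying and verifying this subfamily $\cS$---the \emph{successor graph} alluded to in the contributions---is the heart of the proof and the step where the bound $16$ (as opposed to $62$) is earned.
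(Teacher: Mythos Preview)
Your two-stage speedup is set up correctly and the properness argument for $B$ is fine, but the plan breaks at the crucial step: the image of $B$ is \emph{not} bounded by $16$. After excluding the empty set and the sets containing all three doubletons $\{1,2\},\{1,3\},\{2,3\}$ (these are the only constraints one gets from the compatibility analysis you sketch), there remain up to $55$ possible values of $B(\tau)$, and algorithms $A$ for which the image of $B$ exceeds $16$ do exist. So composing $B$ with an injection into $[16]$ is not available.

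The missing idea is that the bound $16$ is a \emph{chromatic number}, not a cardinality. One must regard the possible outputs of $B$ as the vertex set of a graph $\cS_2$---the successor graph---whose edges record which pairs $(X,Y)$ can arise as $B$-values on \emph{adjacent} nodes of the path (equivalently, $(X,Y)\in S_2$ iff some $x\in C_1$ with $X\in R_1(x)$ has $x\in Z$ for a predecessor's output $Z$, etc.). The paper then shows, by exhibiting an explicit partition of the $55$ candidate vertices into $16$ independent sets (or, equivalently, by a computer check on a fixed supergraph $\cS_2^*$ that dominates $\cS_2(A)$ for every $A$), that $\chi(\cS_2)\le 16$. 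Composing $B$ with any proper $16$-colouring $f\colon C_2\to[16]$ of $\cS_2$ then yields the desired one-sided $16$-colouring algorithm in $t-2$ rounds. Your ``compatibility constraints'' are pointed in the right direction, but they constrain which elements of $\mathcal F$ can sit together \emph{inside} a single $B(\tau)$; what you actually need are constraints on which \emph{values} of $B$ can sit on neighbouring nodes.
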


Now we have all the results for showing the lower bound.

\lbthm*
\begin{proof}
 The cases $r = 2$ and $r = 3$ follow from Lemmas~\ref{lemma:4to3}~and~\ref{lemma:comp-bounds}. For the remaining cases, let $h = r+3$ for some $r > 0$. Suppose $T(\pt{h}, 3) = T(\pt{r+3}, 3) < h$. Then by Lemma~\ref{lemma:16-cols} we would get that $T(\pt{r+3}, 16) < h-2 = r+1$ which contradicts Lemma~\ref{lemma:rounds}.
\end{proof}

\subsection{Proof of Lemma~\ref{lemma:16-cols} via Successor Graphs}

To prove Lemma~\ref{lemma:16-cols}, we analyse the chromatic number of so-called \emph{successor graphs}---a notion similar to Linial's neighbourhood graphs~\cite{linial92locality}. In the following, given a binary relation $R$, we will write $x \in R(y)$ to mean $(y,x) \in R$.

\paragraph{Colouring Relations.}

Suppose $A = A_0$ is a one-sided $3$-colouring algorithm that runs in $t$ rounds. Let $A_1, \dots, A_t$ denote the one-sided algorithms given by iterating Lemma~\ref{lemma:speedup} and $C_{k+1} \subseteq 2^{C_k}$ be the set of colours output by algorithm $A_{k+1}$.

In the following, let $t' = t-k$. Define the \emph{potential successor relation} $S_k \subseteq C_k \times C_k$ to be a binary relation such that $(x,y) \in S_k$ if there exists $x_0, \dots, x_{t'}$ where $x_{i} \neq x_{i+1}$ such that 
\[
 A_k(x_0, \dots, x_{t'-1}) = x \text{ and } A_k(x_1, \dots, x_{t'}) = y.
\]
That is, in the output of algorithm $A_k$ there can be an $x$-coloured node with a successor of colour $y$. Moreover, define the \emph{output relation} $R_k \subseteq C_k \times C_{k+1}$ such that $(x,X) \in R_k$ if 
\[
 A_k(x_0, \dots, x_{t'-2}, x) = X
\]
for some $x_0, \dots, x_{t'-2}$ where $x_{i} \neq x_{i+1}$. That is, a node with colour $x$ can output colour $X$ when executing $A_{k+1}$. From the construction of $A_{k+1}$ given in Lemma~\ref{lemma:speedup}, we get that $R_k = \{ (x,X) : X \subseteq S_k(x), X \neq \emptyset \}$.

\begin{lemma}
 Suppose $X \in R_k(x)$, $Y \in R_k(y)$, and $y \in X$ for some $x,y \in C_k$, then $(X,Y) \in S_{k+1}$ holds. Moreover, the converse holds.
\end{lemma}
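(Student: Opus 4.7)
My plan is to prove both directions by manipulating witness sequences for $A_k$ and $A_{k+1}$, with the converse direction being a direct unpacking of definitions and the forward direction requiring a more careful stitching of witnesses.

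For the converse direction, I would assume $(X, Y) \in S_{k+1}$ and take a witness: a label sequence realizing the $A_{k+1}$-adjacency of $X$ and $Y$. Prepend one arbitrary label (possible since $n \ge 2$) and set $x, y \in C_k$ to be the $A_k$-colours of the two nodes that carry $A_{k+1}$-colours $X$ and $Y$ respectively. Both $X \in R_k(x)$ and $Y \in R_k(y)$ then hold automatically with the same witness, by the definition of the speedup construction used to produce $A_{k+1}$ from $A_k$. Moreover, $X$ is by that construction precisely the set of $A_k$-colours that the successor of the $X$-coloured node can take given its $A_{k+1}$-view, and $y$ is the actual $A_k$-colour of that successor, so $y \in X$.

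For the forward direction, I would fix a witness $(a_0, \dots, a_{t'})$ for $X \in R_k(x)$, so that $A_k(a_0, \dots, a_{t'}) = x$ and $A_{k+1}(a_1, \dots, a_{t'}) = X$. Because $y \in X = \{A_k(a_1, \dots, a_{t'}, z) : z \ne a_{t'}\}$, there exists $c \ne a_{t'}$ with $A_k(a_1, \dots, a_{t'}, c) = y$. The extended sequence $(a_0, a_1, \dots, a_{t'}, c)$ then places a node with $A_{k+1}$-colour $X$ immediately before a node whose $A_k$-colour is $y$ and whose $A_{k+1}$-colour is some $Y' \in R_k(y)$, already giving $(X, Y') \in S_{k+1}$. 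To pin down the specific $Y$ supplied by the hypothesis, I would invoke the structural identity $R_k = \{(x, X) : X \subseteq S_k(x), X \ne \emptyset\}$ recorded just before the lemma; this identity says every non-empty $Y \subseteq S_k(y)$ is realizable as an $A_{k+1}$-output at a $y$-coloured node, so by adjusting the continuation past the second node the $A_{k+1}$-colour there can be made to equal the specified $Y$.

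The hard part is precisely this matching of the specific $Y$ in the forward direction: concatenating separate $R_k$-witnesses for $X$ and for $Y$ at the raw-label level does not work, since the two witness sequences need not agree on their overlap. Passing to the abstract relational level via the identity $R_k = \{(x, X) : X \subseteq S_k(x), X \ne \emptyset\}$ sidesteps this by reducing the stitching to a combinatorial containment check on $C_k$ instead of an equality of label sequences.
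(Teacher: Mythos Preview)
Your treatment of the converse direction is correct and, while more explicit at the witness level, matches what the paper does.

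The forward direction, however, has a real gap. You correctly build the extended sequence $(a_0,\dots,a_{t'},c)$ and observe that the second node (with label $c$) has $A_k$-colour $y$ and some $A_{k+1}$-colour $Y'$, yielding $(X,Y')\in S_{k+1}$. The problem is your proposed step for upgrading $Y'$ to the given $Y$: you say that ``by adjusting the continuation past the second node the $A_{k+1}$-colour there can be made to equal the specified $Y$.'' But $A_{k+1}$ is one-sided: the $A_{k+1}$-colour of the second node is $A_{k+1}(a_2,\dots,a_{t'},c)$, which depends only on labels at or \emph{before} that node. Changing anything past $c$ leaves $Y'$ unchanged. The structural identity $R_k=\{(x,X):\emptyset\ne X\subseteq S_k(x)\}$ (even granting it) only tells you that \emph{some} witness places a $y$-coloured node with $A_{k+1}$-output $Y$; it does not let you realise $Y$ at \emph{this particular} $y$-coloured node, whose entire $A_{k+1}$-view $(a_2,\dots,a_{t'},c)$ is already fixed by your construction of the $X$-witness and the choice of $c$.

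To be fair, the paper's own proof of the forward direction is argued at the same informal level---it asserts that because a node of colour $x$ ``may output $X$'' and a node of colour $y$ ``may output $Y$'', and $(x,y)\in S_k$, we get $(X,Y)\in S_{k+1}$, without explaining how three separate witnesses are stitched into one. So your difficulty is genuine and not an artefact of your approach; but the specific fix you propose (adjusting labels downstream of the second node) cannot work for a one-sided algorithm.
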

\begin{proof}
 As we have $y \in X \subseteq S_k(x)$, this means that a node with colour $x$ may have a successor of colour $y$ after executing algorithm $A_k$. Moreover, as $X \in R_k(x)$ and $Y \in R_k(y)$ hold, then a node with colour $x$ may output $X$ and node with colour $y$ may output $Y$ when executing $A_{k+1}$. Thus, after executing $A_{k+1}$ we may have a node with colour $X$ that has a successor with colour $Y$. Therefore, $(X,Y) \in S_{k+1}$.

 To show the converse, suppose that $(X,Y) \in S_{k+1}$, that is, in some output of $A_{k+1}$ a node $u$ with colour $X$ having a successor $v$ with colour $Y$. Now there must exist some colour $x$ that $X \in R_k(x)$ and some colour $y$ such that $Y \in R_k(y)$. As $v$ is a successor of $u$, the algorithm $A_{k+1}$ outputs a set $X$ consisting of all possible colours for any successor of $u$, and thus, we have $y \in X$.
\end{proof}

\paragraph{Successor Graphs.} For any choice of $A=A_0$, we can construct the successor relation $S_k$ and using this relation, we can define the \emph{successor graph} of $A$ to be the graph $\cS_k(A) = (C_k, E_k)$, where $E_k = \{ \{x,y\} : (x,y) \in S_k \}$. These graphs have the following property:
\begin{lemma} \label{lemma:chromatic}
 Let $\cS_k = (C_k, S_k)$ be the successor graph of $A$, and let $t$ be the running time of $A$. If $f \colon C_k \to [\chi]$ is a proper colouring of $\cS_k$, then $f \circ A_k$ is a one-sided $\chi$-colouring algorithm that runs in $t-k$ rounds. That is, $T(n,\chi) \le t-k$.
\end{lemma}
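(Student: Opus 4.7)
The claim has two parts: first, that $f \circ A_k$ can be implemented in $t-k$ rounds; second, that its outputs form a proper colouring of the path. Neither should require any deep machinery---this lemma is essentially a bookkeeping argument that packages the definition of the successor graph into its intended use.

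For the running time, I would begin by recalling that $A_0 = A$ runs in $t$ rounds, and that iterating Lemma~\ref{lemma:speedup} yields algorithms $A_1, \dots, A_t$ where $A_k$ runs in exactly $t-k$ rounds on the inputs of the preceding level (since each application of the speed-up construction strips off one round). Since $f$ is a fixed mapping on the colour set $C_k$, every node can apply $f$ locally to its own output without any further communication. Hence $f \circ A_k$ runs in $t-k$ rounds.

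For correctness, I would argue as follows. Fix any properly coloured input path and consider two adjacent nodes $u, v$ with $v$ the successor of $u$. Let $x = A_k(\cdot)$ and $y = A_k(\cdot)$ be their outputs on this execution. By the very definition of the potential successor relation $S_k$, there exist input labels $x_0, \dots, x_{t'}$ (where $t' = t-k$) with $x_i \neq x_{i+1}$ such that $A_k$ produces $x$ at $u$ and $y$ at $v$; the input of the path restricted to the $2(t-k)+1$ relevant labels provides exactly such a witness. Therefore $(x,y) \in S_k$, i.e.\ $\{x,y\} \in E_k$. Since $f$ is a proper colouring of $\cS_k$, this forces $f(x) \neq f(y)$, so the pair $(f\circ A_k)(u), (f\circ A_k)(v)$ differs.

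The main obstacle, to the extent there is one, is simply to be careful that the witness for $(x,y) \in S_k$ truly comes from the node's local view: the one-sided algorithm $A_k$ at $v$ depends only on $v$ and its $t-k$ predecessors (which include $u$), so the sequence of input labels that determine $x$ and the sequence that determines $y$ overlap in precisely the way demanded by the definition of $S_k$. Once this alignment is observed, the conclusion $T(n,\chi) \le t-k$ follows immediately, since $f \circ A_k$ is a one-sided $\chi$-colouring algorithm whose round complexity is $t-k$.
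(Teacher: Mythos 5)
Your proof is correct and follows essentially the same route as the paper: the outputs of $A_k$ at a node and its successor are, by definition, witnessed by their overlapping local views and hence related by $S_k$, so a proper colouring $f$ of $\cS_k$ composed with $A_k$ is again a proper colouring, computed with no extra communication. (One cosmetic slip: the witness sequence for $(x,y)\in S_k$ consists of $t-k+1$ labels, not $2(t-k)+1$, since the algorithm is one-sided; this does not affect the argument.)
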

\begin{proof}
 Let $u$ be the predecessor of $v$ on a directed path. Now by definition, 
\begin{align*}
         &A_k(x_0, \dots, x_{t-1}, u) = x \neq y = A_k(x_1, \dots, x_{t-1}, u, v) \\
&\implies (x,y) \in S_k \implies f(x) \neq f(y).
\end{align*}
Therefore, $f \circ A_k$ is a one-sided $\chi$-colouring algorithm.
 \end{proof}

In the next section, we show the following lemma from which Lemma~\ref{lemma:16-cols} follows.

\begin{lemma} \label{lemma:s2-16-cols}
 For any $t$-time 3-colouring algorithm $A$, the successor graph $\cS_2(A)$ can be coloured with $16$ colours.
\end{lemma}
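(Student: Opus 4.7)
The plan is to colour each $X \in C_2$ by its downward closure in the inclusion order on subsets of $\{1,2,3\}$. Writing $\mathcal{C}_1 := 2^{\{1,2,3\}} \setminus \{\emptyset, \{1,2,3\}\}$ for the six nonempty proper subsets of $\{1,2,3\}$, the construction of $A_1$ forces $C_1 \subseteq \mathcal{C}_1$, so every $X \in C_2$ is a family of nonempty proper subsets of $\{1,2,3\}$. Define
\[
 f(X) = \bigl\{\, w \in \mathcal{C}_1 : w \subseteq z \text{ for some } z \in X \,\bigr\}.
\]
I will show $f$ takes values in the set of nonempty proper down-sets of $\mathcal{C}_1$, of which there are exactly $16$, and that $f$ is a proper colouring of $\cS_2(A)$.

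Counting the palette is a quick enumeration of antichains in $\mathcal{C}_1$ (the bipartite poset with three singletons each below two doubletons): splitting by the number of doubletons in the antichain yields $8 + 6 + 3 + 1 = 18$ down-sets, and subtracting the empty one and the full one leaves $16$.

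That $f(X)$ really lies in this palette reduces to two facts. Nonemptiness follows from $X \subseteq \mathcal{C}_1$, which gives $X \subseteq f(X)$. Properness $f(X) \subsetneq \mathcal{C}_1$ I would derive from the key containment
\[
 S_1(x) \subseteq \bigl\{\, z \in \mathcal{C}_1 : x \not\subseteq z \,\bigr\} \qquad \text{for every } x \in C_1,
\]
obtained by unpacking the characterisation of $S_1$ in the preceding lemma and using that the diagonal of $S_0$ is empty (since $A$ is a proper colouring): if $b \in x$ witnesses $(x,z) \in S_1$, then $z \in R_0(b)$ forces $b \notin z$, hence $x \not\subseteq z$. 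Since $X \in C_2$ means $X \subseteq S_1(x)$ for some $x \in C_1$, no $z \in X$ contains $x$, so $x \notin f(X)$.

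For the proper-colouring property, suppose $(X, Y) \in S_2$. The converse direction of the preceding lemma gives $x, y \in C_1$ with $X \in R_1(x)$, $Y \in R_1(y)$, and $y \in X$. Applying the key containment to $Y \subseteq S_1(y)$ yields $y \not\subseteq z$ for every $z \in Y$, so $y \notin f(Y)$; but $y \in X \subseteq f(X)$, hence $f(X) \ne f(Y)$. The main obstacle is spotting that downward closure is the right invariant: once one fixes it, the single inclusion $S_1(x) \subseteq \{z : x \not\subseteq z\}$ does double duty, simultaneously forbidding $f(X) = \mathcal{C}_1$ (dropping the naive $17$ nonempty down-sets to the $16$ proper ones) and ensuring $f(X) \ne f(Y)$ along every edge of $\cS_2(A)$.
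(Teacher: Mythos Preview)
Your proof is correct, and in fact your colouring $f$ coincides with the paper's: the $16$ nonempty proper down-sets of $\mathcal{C}_1$ are exactly the $7$ singleton classes $\mathcal{X}_0$ together with the $9$ classes $\mathcal{X}_1,\mathcal{X}_2,\mathcal{X}_3$ that the paper lists by hand. The route, however, is genuinely different. The paper enumerates the sixteen colour classes explicitly in four families and then proves three separate lemmas, one per nontrivial family, each arguing independence by identifying the unique (or two possible) $x\in C_1$ with $X\subseteq S_1(x)$ and checking that this $x$ cannot lie in a neighbour. You instead give a closed-form invariant---the downward closure in the inclusion order on $\mathcal{C}_1$---and a single two-line argument: the inclusion $S_1(x)\subseteq\{z:x\not\subseteq z\}$ simultaneously forces $f(X)\neq\mathcal{C}_1$ and, via the characterisation of $S_2$, gives a witness $y\in f(X)\setminus f(Y)$ along every edge. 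Your approach is shorter, avoids the case split entirely, and makes transparent why the answer is $16$ (it is the number of nonempty proper down-sets of the poset $\mathcal{C}_1$); the paper's explicit enumeration, on the other hand, plugs more directly into the computational verification they also provide and into Figure~\ref{fig:complement}.
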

In particular, this holds for an optimal algorithm $A$ with a running time of $t = T(n,3)$. Together with Lemma~\ref{lemma:chromatic}, this implies Lemma~\ref{lemma:16-cols}. We next show how to prove Lemma~\ref{lemma:s2-16-cols} in two ways: with computers, and without them.

\subsection{A Human-Readable Proof of Lemma~\ref{lemma:s2-16-cols}}

We start by giving a traditional human-readable proof for Lemma~\ref{lemma:s2-16-cols}. That is, we argue that for any one-sided 3-colouring algorithm $A=A_0$ the successor graph $\cS_2(A)$ can be coloured with $16$ colours. Later in Section~\ref{sec:computer-proof}, we give a computational proof of the same result. In the following, we fix $A$ and denote $\cS_2 = \cS_2(A)$ for brevity. 

\paragraph{Structural Properties.} 

We start with the following observations.
\begin{remark}\label{remark:colors}
Sets $C_0$ and $C_1$ satisfy
\begin{align*}
    C_0 &\subseteq \{1,2,3\}, \\
    C_1 &\subseteq \bigl\{\{1\},\{2\},\{3\},\{1,2\},\{1,3\},\{2,3\}\bigr\}.
\end{align*}
\end{remark}

\begin{remark}\label{remark:successors}
Relation $S_1$ satisfies
\begin{align*}
    S_1(i) &\subseteq \big\{ X \in C_1 : i \notin X \big\}, \\
    S_1(\{i,j\}) &\subseteq \big\{ X \in C_1 : \{i,j\} \nsubseteq X \big\}.
\end{align*}
\end{remark}

\begin{remark}\label{remark:nodes}
Consider any $X \subseteq C_1$ with $\bigl\{\{1,2\},\{1,3\},\{2,3\}\bigr\} \subseteq X$. Then there is no $x \in C_1$ with $X \subseteq S_1(x)$. Therefore $A_2$ cannot output colour $X$, and hence $X \notin C_2$.
\end{remark}

Hence graph $\cS_2$ has $|C_2| \le 55$ nodes: out of the $2^6 = 64$ candidate colours, we can exclude the empty set and $8$ other sets identified in Remark~\ref{remark:nodes}. We will now partition the remaining nodes in $16$ colour classes (independent sets).

\paragraph{Colour Classes.} 

There are four types of colour classes. First, for each $\emptyset \ne X \subseteq [3]$ we define a singleton colour class
\[
    \mathcal{X}_0(X) = \Bigl\{ \bigl\{ \{x\} : x \in X \bigr\} \Bigr\},
\]
that is, an independent set of size $1$. Then for each triple
\[
    (i,j,k) \in \bigl\{ (1,2,3),\, (1,3,2),\, (2,3,1) \}
\]
we have three colour classes:
\begin{align*}
\mathcal{X}_1(i,j,k) &= \Bigl\{ X \in C_2 : \bigl\{ \{i,j\}, \{i,k\} \bigr\} \subseteq X \subseteq  \bigl\{ \{i,j\}, \{i,k\}, \{i\}, \{j\}, \{k\} \bigr\} \Bigr\} \\
\mathcal{X}_2(i,j,k) &= \Bigl\{ X \in C_2 : \bigl\{ \{i,j\}, \{k\} \bigr\} \subseteq X \subseteq  \bigl\{ \{i,j\}, \{i\}, \{j\}, \{k\} \bigr\} \Bigr\}, \\
\mathcal{X}_3(i,j,k) &= \Bigl\{ X \in C_2 : \bigl\{ \{i,j\} \bigr\} \subseteq X \subseteq  \bigl\{ \{i,j\}, \{i\}, \{j\} \bigr\} \Bigr\}.
\end{align*}
In total, there are $7$ singleton colour classes, and $3 \times 3$ other colour classes, giving in total $16$ colour classes. Figure~\ref{fig:complement} shows the complement of a supergraph of $\cS_2$; each of the above colour classes correspond to a clique in the complement graph.

It can be verified that each of the $55$ possible nodes of $\cS_2$ is included in exactly one of the colour classes. It remains to be shown that each colour class is indeed an independent set of $\cS_2$.

The singleton classes form independent sets trivially. We handle each type of the remaining colour classes separately. Recall that there is an edge $\{X,Y\}$ in $\cS_2$ if either $X \in S_2(Y)$ or $Y \in S_2(X)$. 

\begin{figure}[p]
 \begin{center}
\includegraphics[scale=0.65,page=3]{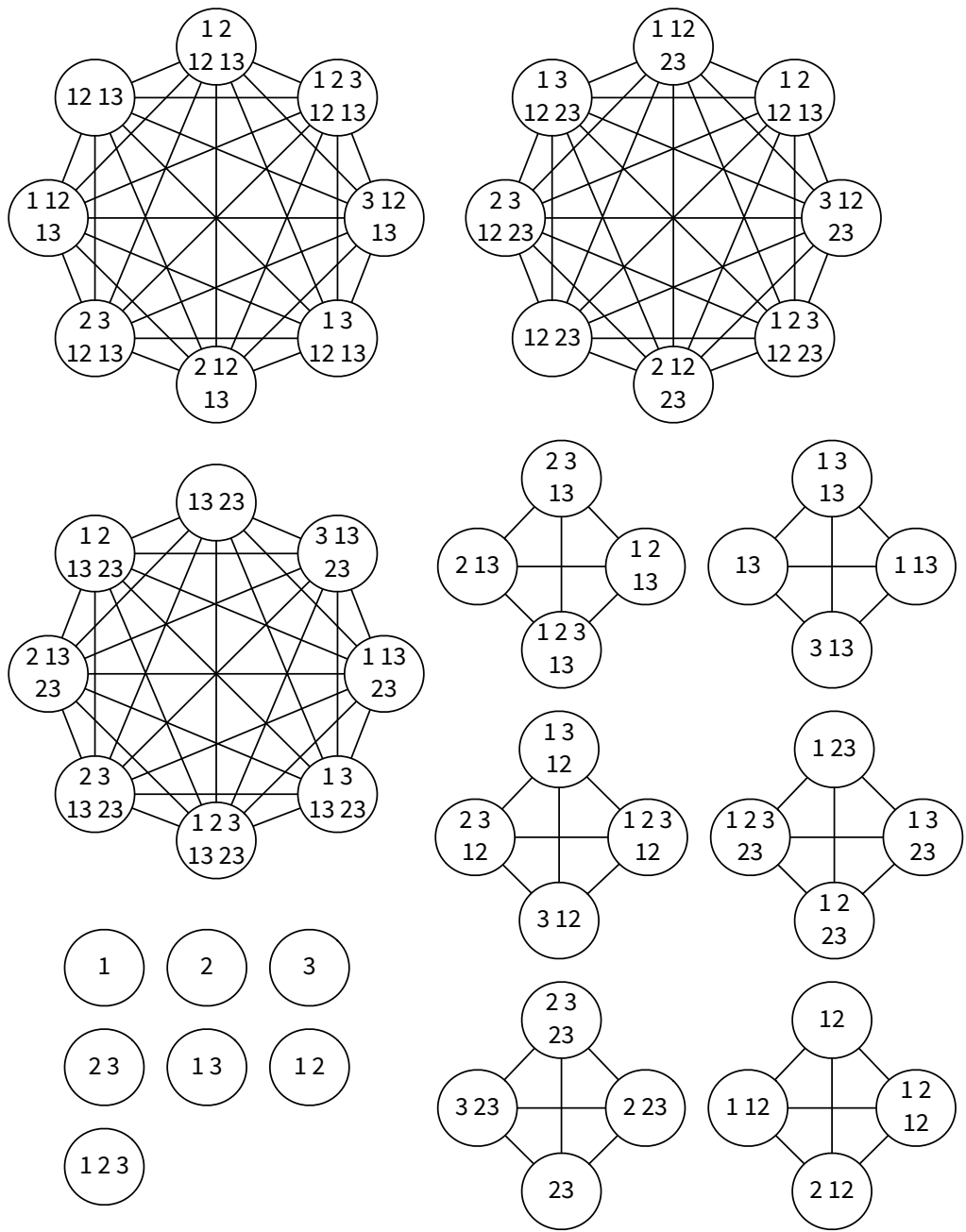}
 \end{center}
\caption{This illustrations shows the \emph{complement} of a graph we call $\cS_2^*$. For any algorithm $A$, the successor graph $\cS_2(A)$ is a subgraph of $\cS_2^*$, and hence, a proper colouring of $\cS_2^*$ is a proper colouring of $\cS_2(A)$. Each clique in the figure corresponds to a colour class in $\cS_2^*$. We use a shorthand notation: for example, the circle labelled with ``1~2~12'' is the node $\{\{1\},\{2\},\{1,2\}\}$. \label{fig:complement}}
\end{figure}

\begin{lemma}
 The class $\mathcal{X}_1(i,j,k)$ forms an independent set in $\cS_2$.
 \end{lemma}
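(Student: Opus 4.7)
My plan is a direct case analysis using the characterisation of $S_2$ in terms of $R_1$ and $S_1$ from the previous subsection.

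First I would unfold definitions. To show there is no edge in $\cS_2$ between $X, Y \in \mathcal{X}_1(i,j,k)$, it suffices to show that $(X,Y) \notin S_2$ (by the symmetric roles of $X$ and $Y$ in $\mathcal{X}_1$, the same argument will rule out $(Y,X) \in S_2$). By the characterisation of $S_2$ via $R_1$, I would need to exhibit some $y \in C_1$ with $y \in X$ and $Y \in R_1(y)$, i.e., $Y \subseteq S_1(y)$. I would then derive a contradiction for every possible choice of such $y$.

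Next I would exploit the key structural facts. Any $y \in X$ belongs to $\{\{i,j\},\{i,k\},\{i\},\{j\},\{k\}\}$, since $X \subseteq \{\{i,j\},\{i,k\},\{i\},\{j\},\{k\}\}$ by definition of $\mathcal{X}_1(i,j,k)$. On the other hand, because $X, Y \in \mathcal{X}_1(i,j,k)$, both $\{i,j\}$ and $\{i,k\}$ lie in $Y$, so both must be members of $S_1(y)$. Then I would go through the five candidates for $y$ and invoke Remark~\ref{remark:successors} in each:
\begin{itemize}
 \item If $y = \{i\}$, then $S_1(\{i\})$ contains no set with $i$ as an element; but $\{i,j\} \in Y$ contains $i$.
 \item If $y = \{j\}$, similarly $\{i,j\} \in Y$ contains $j$.
 \item If $y = \{k\}$, then $\{i,k\} \in Y$ contains $k$.
 \item If $y = \{i,j\}$, then $S_1(\{i,j\})$ excludes sets containing $\{i,j\}$ as a subset; but $\{i,j\} \in Y$ trivially contains itself.
 \item If $y = \{i,k\}$, symmetrically $\{i,k\} \in Y$ rules this out.
\end{itemize}
In every case we obtain $Y \not\subseteq S_1(y)$, contradicting $Y \in R_1(y)$. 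Hence $(X,Y) \notin S_2$, and by interchanging the roles of $X$ and $Y$ also $(Y,X) \notin S_2$, so $\{X,Y\}$ is not an edge in $\cS_2$.

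The calculation is entirely routine; the only potential subtlety is the symmetry argument at the end, which holds because the definition of $\mathcal{X}_1(i,j,k)$ is invariant under exchanging the two members $X$ and $Y$ — both sit inside the same five-element bound and both contain $\{\{i,j\},\{i,k\}\}$ — so the above case analysis applies verbatim with the roles of $X$ and $Y$ swapped.
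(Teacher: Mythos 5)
Your proof is correct and follows essentially the same route as the paper's: both arguments use the characterisation of $S_2$ via $R_1$ and Remark~\ref{remark:successors} to show that the only colour $y$ with $\{\{i,j\},\{i,k\}\} \subseteq S_1(y)$ is $\{j,k\}$, which no member of $\mathcal{X}_1(i,j,k)$ contains. The paper merely packages your five-case check as the single observation ``$Y \in \mathcal{X}_1(i,j,k) \implies Y \in R_1(\{j,k\})$ only''.
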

\begin{proof}
 Let $\mathcal{X} = \mathcal{X}_1(i,j,k)$. Observe that for any $X \in \mathcal{X}$ we have $\big\{ \{i,j\}, \{i,k\} \big\} \subseteq X$ and $\{j,k\} \notin X$. From Remark~\ref{remark:successors} it easily follows that the relation $S_1$ satisfies 
\[
\big\{ \{i,j\}, \{i,k\} \big\} \subseteq X \subseteq 2^{S_1(x)} \implies x = \{j,k\}.
\]
In particular, we get that $X \in \mathcal{X} \implies X \in R_1(\{j,k\})$.

In order to show that $\mathcal{X}$ is an independent set in $\cS_2$, let $Y$ and $Z$ be vertices of $\cS_2$ such that $Y \in S_2(Z)$. First, if $Y \in \mathcal{X}$, then $Y \in R_1(\{j,k\})$. In particular, this means that $\{j,k\} \in Z$ and we get that $Z \notin \mathcal{X}$. For the second case, suppose $\{j,k\} \notin Z$. This means that a node with colour $Z$ cannot have a successor with colour $\{j,k\}$ in a colouring produced by $A_1$. Thus, it must be that $Y \notin R_1(\{j,k\})$. By the earlier observation, we get that $Y \notin \mathcal{X}$.
\end{proof}

\begin{lemma}
The class $\mathcal{X}_2(i,j,k)$ is independent in $\cS_2$. 
\end{lemma}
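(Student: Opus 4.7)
The plan is to mirror the structure of the preceding proof for $\mathcal{X}_1(i,j,k)$: first I determine exactly which colours $x \in C_1$ can satisfy $X \in R_1(x)$ when $X \in \mathcal{X}_2(i,j,k)$, and then I show that this restriction is incompatible with the condition $y \in X$ that would be forced by any edge of $\cS_2$ inside the class.

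Fix $X \in \mathcal{X}_2(i,j,k)$, so both $\{i,j\}$ and $\{k\}$ are elements of $X$. Since $X \in R_1(x)$ means $X \subseteq S_1(x)$, I need $\{i,j\} \in S_1(x)$ and $\{k\} \in S_1(x)$ simultaneously. Reading off Remark~\ref{remark:successors}, the first condition forces $x$ to be either the singleton $\{k\}$ or one of the pairs $\{i,k\}, \{j,k\}$ (the only pair ruled out is $\{i,j\}$ itself, and singletons $\{i\}, \{j\}$ are excluded because they lie in $\{i,j\}$). The second condition $\{k\} \in S_1(x)$ rules out $x = \{k\}$ and is trivial for any pair. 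What remains is exactly $x \in \bigl\{\{i,k\}, \{j,k\}\bigr\}$.

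Now suppose for contradiction that $X, Y \in \mathcal{X}_2(i,j,k)$ and $(X, Y) \in S_2$. By the lemma characterising $S_{k+1}$ in terms of $R_k$ (the unnamed lemma just before the \emph{Successor Graphs} paragraph), there exist $x, y \in C_1$ with $X \in R_1(x)$, $Y \in R_1(y)$, and $y \in X$. Applying the previous paragraph to $Y$ in place of $X$, the colour $y$ must lie in $\bigl\{\{i,k\}, \{j,k\}\bigr\}$. However $y \in X \subseteq \bigl\{\{i,j\}, \{i\}, \{j\}, \{k\}\bigr\}$, and neither $\{i,k\}$ nor $\{j,k\}$ belongs to this upper-bounding set -- contradiction. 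The symmetric case $(Y, X) \in S_2$ follows by swapping the roles of $X$ and $Y$.

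The main obstacle is the bookkeeping in the first step: carefully translating Remark~\ref{remark:successors} into a tight list of admissible predecessors for a colour that mixes the pair $\{i,j\}$ with the disjoint singleton $\{k\}$. Once $x$ is pinned down to the two pairs $\{i,k\}, \{j,k\}$, which lie outside the upper bound on $X$ imposed by membership in $\mathcal{X}_2(i,j,k)$, the contradiction is immediate and the argument tracks the $\mathcal{X}_1$ proof essentially line by line.
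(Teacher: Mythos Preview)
Your proof is correct and follows essentially the same approach as the paper: identify via Remark~\ref{remark:successors} that any $X \in \mathcal{X}_2(i,j,k)$ can only arise as $X \in R_1(x)$ for $x \in \{\{i,k\},\{j,k\}\}$, and then observe that neither of these colours lies in the upper bound $\{\{i,j\},\{i\},\{j\},\{k\}\}$ defining the class, so the condition $y \in X$ forced by an $S_2$-edge is impossible. The only cosmetic difference is that the paper phrases the conclusion as a two-case analysis on $Y \in S_2(Z)$ rather than a direct contradiction, but the content is identical.
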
 
\begin{proof}
Let $\mathcal{X} = \mathcal{X}_2(i,j,k)$. In this class, for every $X \in \mathcal{X}$ it holds that $\big\{ \{i,j\}, \{k\} \big\} \subseteq X$. 
From Remark~\ref{remark:successors} it follows that
\[
 \left\{ \{i,j\}, \{k\} \right\} \subseteq X \subseteq 2^{S_1(x)} \implies x \in \left\{ \{i,k\}, \{j,k\} \right\}. 
\]
Thus, $X \in \mathcal{X} \implies X \in R_1(\{i,k\}) \cup R_1(\{j,k\})$. 

In order to show that the class $\mathcal{X}$ forms an independent set in $\cS_2$, suppose $Y \in S_2(Z)$ for some $Y,Z \in C_2$. First, if $Y \in \mathcal{X}$, then we have that either $\{i,k\} \in Z$ or $\{j,k\} \in Z$ so $Z \notin \mathcal{X}$. Second, if $Z \in \mathcal{X}$, then $\big\{\{i,k\}, \{j,k\}\big\} \cap Z = \emptyset$. This means that a node with colour $Z$ cannot have successor of colour $\{i,k\}$ or $\{j,k\}$ as a successor, hence $Y \notin R_1(\{i,k\}) \cup R_1(\{j,k\})$.
\end{proof}

\begin{lemma}
 The class $\mathcal{X}_3(i,j,k)$ forms an independent set in $\cS_2$.
\end{lemma}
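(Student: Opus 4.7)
The plan is to follow exactly the template established in the proofs for $\mathcal{X}_1$ and $\mathcal{X}_2$. First I would pin down which pre-images $x \in C_1$ can satisfy $X \in R_1(x)$ for some $X \in \mathcal{X}_3(i,j,k)$, and then I would combine this with the structural characterisation of $S_2$ in terms of $R_1$ (the lemma stated just before the ``Successor Graphs'' paragraph) to derive a contradiction from the assumption that two members of $\mathcal{X}_3(i,j,k)$ are adjacent in $\cS_2$.

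For the first step, every $X \in \mathcal{X}_3(i,j,k)$ contains $\{i,j\}$. If $X \in R_1(x)$ then $\{i,j\} \in S_1(x)$, and Remark~\ref{remark:successors} forces $x \not\subseteq \{i,j\}$, that is $k \in x$. Combined with $x \in C_1$ and Remark~\ref{remark:colors}, this gives
\[
    X \in \mathcal{X}_3(i,j,k) \implies X \in R_1(\{k\}) \cup R_1(\{i,k\}) \cup R_1(\{j,k\}).
\]
Write $\mathcal{K} = \bigl\{\{k\}, \{i,k\}, \{j,k\}\bigr\}$ for the set of admissible pre-images.

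For the second step, suppose for a contradiction that $Y, Z \in \mathcal{X}_3(i,j,k)$ and $Y \in S_2(Z)$. The structural lemma yields $x, y \in C_1$ with $Z \in R_1(x)$, $Y \in R_1(y)$, and $y \in Z$. Applying the first step to $Y$ gives $y \in \mathcal{K}$. On the other hand, $Z \in \mathcal{X}_3(i,j,k)$ forces $Z \subseteq \bigl\{\{i,j\}, \{i\}, \{j\}\bigr\}$, which is disjoint from $\mathcal{K}$, so $y \in Z$ is impossible---contradiction. The case $Z \in S_2(Y)$ follows by the same argument with the roles of $Y$ and $Z$ swapped.

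The argument is slightly more delicate than for $\mathcal{X}_1$ and $\mathcal{X}_2$, because the single constraint $\{i,j\} \in X$ leaves three candidates for the pre-image rather than pinning it down to one or two. The key observation making the argument go through---essentially the only delicate point---is that every element of $\mathcal{K}$ contains $k$ while the defining superset $\{\{i,j\}, \{i\}, \{j\}\}$ of $\mathcal{X}_3(i,j,k)$ contains no element containing $k$, so the two are disjoint. Once this disjointness is noted, the structural lemma does all the work.
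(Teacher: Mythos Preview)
Your argument is correct and follows exactly the paper's template: restrict the possible $R_1$-preimages of any $X\in\mathcal{X}_3(i,j,k)$, then invoke the structural characterisation of $S_2$ to get a contradiction. In fact your version is slightly more careful than the paper's, which asserts that the only preimage is $\{k\}$; your set $\mathcal{K}=\{\{k\},\{i,k\},\{j,k\}\}$ is what Remark~\ref{remark:successors} actually yields from the single constraint $\{i,j\}\in X$, and your observation that $\mathcal{K}$ is disjoint from $\{\{i,j\},\{i\},\{j\}\}$ is precisely what makes this weaker claim suffice.
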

\begin{proof}
 Let $\mathcal{X} = \mathcal{X}_3(i,j,k)$. Observe that for any $X \in \mathcal{X}$ it holds that $\{i,j\} \in X$ and $\big\{ \{i,k\}, \{j,k\}, \{k\} \big\} \cap X = \emptyset$. Using Remark~\ref{remark:successors} we can check that relation $S_1$ satisfies
\[
 X \subseteq 2^{S_1(x)} \implies x = k.
\]
Thus, $X \in \mathcal{X} \implies X \in R_1(k)$. 

To see that $\mathcal{X}$ is an independent set in $\cS_2$, let $Y \in S_1(Z)$. There are two cases to consider. First, if $Y \in \mathcal{X}$, then $Y \in R_1(k)$. That is a node with colour $k$ can output colour $Y$. Thus, $k \in Z$, so we must have that $Z \notin \mathcal{X}$. Second, if $Z \in \mathcal{X}$, then $k \notin Z$ which means that $Y \notin R_1(k)$. Thus, $Y \notin \mathcal{X}$.
\end{proof}

\subsection{Computational Proof of Lemma~\ref{lemma:s2-16-cols}}\label{sec:computer-proof}

We now give a \emph{computational proof} of Lemma~\ref{lemma:s2-16-cols}, that is, we show how to easily verify with a computer that the claim holds. Essentially this amounts to checking that for every choice of $A=A_0$, the successor graph $\cS_2(A)$ is colourable with 16 colours. However, since any successor graph $\cS_2(A)$ depends on the choice of initial one-sided 3-colouring algorithm $A = A_0$, and there are potentially many choices for $A$, we instead bound the chromatic number of a closely-related graph $\cS_2^*$ that contains $\cS_2(A)$ for any $A$ as a subgraph. 

To construct the graph $\cS_2^*$, we consider the successor graph of a ``worst-case'' algorithm that may output ``all possible'' colours in its output set. Specifically, this means that we simply replace the subset relation in Remarks~\ref{remark:colors}~and~\ref{remark:successors} with an equality. Therefore, the graph $\cS_2^*$ can be constructed using a fairly straightforward computer program, with a mechanical application of the definitions. The end result is a dense graph on $55$ nodes; its complement is shown in Figure~\ref{fig:complement}. 

It is now easy to discover a colouring of graph $\cS_2^*$ that uses 16 colours with the help of e.g.\ modern SAT solvers. This implies that any subgraph $\cS_2(A)$ can also be coloured with $16$ colours and Lemma~\ref{lemma:s2-16-cols} follows.

\section{Main Theorems}

We now have all the pieces for proving Theorem~\ref{thm:main}:

\mainthm* 
\begin{proof}
Let $n = \pt{2k+1} + 1$ for any $k \ge 2$. Be Lemma~\ref{lemma:hardness} we have the identity
\begin{equation}
\label{eq:identity}
 C(n , 3) = \left \lceil T(n,3)/2 \right \rceil
\end{equation}
and from Theorems~\ref{thm:ub}~and~\ref{thm:lb} we get that
\[
 2k+1 \le T(n ,3) \le 2k+2,
\]
which together with (\ref{eq:identity}) yields $C(n,3) = k+1$. Since $\logst n = 2k+2$ it follows that $C(n,3) = k+1 = \logst n / 2$.
\end{proof}

For the remaining values of $n$ we get almost-tight bounds. There remains a slack of \emph{one} communication round in the upper and lower bounds for $C(n,3)$.

\begin{theorem}
 For any $n \ge 4$, 
 \[
  \left \lceil \frac{1}{2} \left( \logst n -1 \right) \right \rceil \le C(n,3) \le \left\lceil \frac{1}{2} \left( \logst n + 1 \right) \right\rceil.
 \]
\end{theorem}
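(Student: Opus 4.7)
The plan is to derive both inequalities from the two main technical results already proved (Theorem~\ref{thm:ub} and Theorem~\ref{thm:lb}), together with the identity $C(n,c) = \lceil T(n,c)/2 \rceil$ from Lemma~\ref{lemma:hardness}. The only work is to translate the bounds, which are stated in terms of the tetration tower $\pt{h}$, into bounds phrased in terms of $\logst n$ for an arbitrary $n$.

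First I would fix $n \ge 4$ and set $h = \logst n$. Unfolding the definition of the log-star function shows that $\log^{(h)} n \le 1$ is equivalent to $n \le \pt{h}$, while the minimality of $h$ gives $\log^{(h-1)} n > 1$, i.e., $n > \pt{h-1}$. This sandwiches $n$ between consecutive entries of the tower and is the only nontrivial bookkeeping step.

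For the upper bound, since $n \le \pt{h} \le \pt{h}+1$, monotonicity of $T(\cdot, 3)$ combined with Theorem~\ref{thm:ub} yields
\[
    T(n,3) \le T(\pt{h}+1,\, 3) \le h+1 = \logst n + 1,
\]
valid because $n \ge 4$ gives $h \ge 2$ as required by Theorem~\ref{thm:ub}. Applying Lemma~\ref{lemma:hardness} then gives $C(n,3) \le \lceil (\logst n + 1)/2 \rceil$. For the lower bound, when $h \ge 3$ we have $\pt{h-1} < n$, so monotonicity together with Theorem~\ref{thm:lb} yields
\[
    T(n,3) \ge T(\pt{h-1},\, 3) \ge h-1 = \logst n - 1,
\]
and again Lemma~\ref{lemma:hardness} gives $C(n,3) \ge \lceil (\logst n - 1)/2 \rceil$.

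The only loose end is the base case $h = 2$, i.e., $n = 4$ (the unique $n \ge 4$ with $\logst n = 2$). There the desired lower bound is $C(4,3) \ge \lceil 1/2 \rceil = 1$, which follows immediately from Lemma~\ref{lemma:4to3} since $T(4,3) \ge 2$ implies $C(4,3) \ge 1$. I do not anticipate a real obstacle here; the proof is essentially a short case analysis that converts the tower-indexed bounds into the $\logst n$ form, the only subtle point being the equivalence $\logst n = h \iff \pt{h-1} < n \le \pt{h}$ and taking care that Theorems~\ref{thm:ub} and \ref{thm:lb} are only invoked when $h$ lies in the range they cover.
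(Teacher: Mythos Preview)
Your proof is correct and follows essentially the same approach as the paper's own proof. The only cosmetic difference is the indexing convention: you set $h = \logst n$ and sandwich $n$ as $\pt{h-1} < n \le \pt{h}$, whereas the paper sets $h = \logst n - 1$ and writes $n \in \{\pt{h}+1,\dots,\pt{h+1}\}$; the invocations of Theorems~\ref{thm:ub} and~\ref{thm:lb}, the monotonicity step, the base case $n=4$, and the final application of Lemma~\ref{lemma:hardness} all match.
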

\begin{proof}
 For $n=4$, we have shown that $T(4,3) = 2$ so the bounds follow. Fix $n > 4$. Now there exists some $h > 1$ such that $n \in \{\pt{h} + 1, \dots, \pt{h+1}\}$ and $h = \logst n - 1$. Theorems~\ref{thm:ub}~and~\ref{thm:lb} give us the bounds
\[
 \logst n - 1 = h \le T(n,3) \le h+2 = \logst n + 1
\]
and since $C(n,3) = \lceil T(n,3)/2 \rceil$, the claimed bounds follow.
\end{proof}

\section{Conclusions and Discussion}

In this work we gave exact and near-exact bounds on the complexity of distributed graph colouring. The key result is that the complexity of colour reduction from $n$ to $3$ on directed paths and cycles is exactly $\frac{1}{2}\logst n$ rounds for infinitely many values of $n$, and very close to it for all values of $n$.

In essence, we have shown that the colour reduction algorithm by Naor and Stockmeyer is near-optimal, while the algorithm by Cole and Vishkin is suboptimal. We have also seen that Linial's lower bound had still some room for improvements.

One of the novel techniques of this work was the use of \emph{\textbf{computers in lower-bound proofs}}. Two key elements are results of a computer search:
\begin{itemize}
    \item Lemma~\ref{lemma:comp-bounds}: The proof of $T(16,3) \ge 3$ is based on the analysis of the chromatic number of the neighbourhood graph $\mathcal{N}_{7,1}$.
    \item Lemma~\ref{lemma:16-cols}: The proof of $T(n,16) \le T(n, 3) - 2$ is based on the analysis of the chromatic number of the successor graph $\cS_2$.
\end{itemize}
In both cases we used computers to analyse the chromatic numbers of various successor graphs and neighbourhood graphs, in order to find the right parameters for our needs.

The idea of analysing \emph{\textbf{neighbourhood graphs}} and their chromatic numbers is commonly used in the context of human-designed lower-bound proofs \cite{linial92locality,naor91lower,kuhn06complexity,fraigniaud07distributed}. It is also fairly straightforward to construct neighbourhood graphs so that we can use computers and graph-colouring algorithms to discover new upper bounds~\cite{rybicki11msc}, and the same technique can be used to prove lower bounds on $T(n,c)$ for small, fixed values of $n$ and $c$; in our case we used it to bound $T(16,3)$. However, this does not yield bounds on, e.g., $T(n,3)$ for large values of $n$.

The key novelty of our work is that we can use the chromatic number of \emph{\textbf{successor graphs}} to give improved bounds on $T(n,3)$ for all values of $n$. To do that, it is sufficient to find a successor graph $\cS_k$ with a small chromatic number, apply Lemma~\ref{lemma:chromatic}. The same technique can be also used to study $T(n,c)$ for any fixed $c \ge 3$.

\section*{Acknowledgements}

We thank Juho Hirvonen for helpful comments. Parts of this work are based on the first author's MSc thesis~\cite{rybicki11msc}. Computer resources were provided by the Aalto University School of Science ``Science-IT'' project, and by the Department of Computer Science at the University of Helsinki.

\bibliographystyle{plain}
\bibliography{coloring}

\end{document}